\newtheorem{theorem}{Theorem}[section]
\newtheorem{corollary}[theorem]{Corollary}
\newtheorem{proposition}[theorem]{Proposition}
\newtheorem{remark}[theorem]{Remark}
\newenvironment{proof}[1][Proof]{\noindent\textbf{#1.} }{\ \rule{0.5em}{0.5em}}
\begin{document}

\author{Mark Korenblit \\
Department of Computer Science\\
Holon Institute of Technology, Israel\\
korenblit@hit.ac.il\bigskip \\
Vadim E. Levit\\
Department of Computer Science and Mathematics\\
Ariel University, Israel\\
levitv@ariel.ac.il }
\title{Fibonacci Graphs and their Expressions}
\date{}
\maketitle

\begin{abstract}
The paper investigates relationship between algebraic expressions and
graphs. We consider a digraph called a Fibonacci graph which gives a generic
example of non-series-parallel graphs. Our intention in this paper is to
simplify the expressions of Fibonacci graphs and eventually find their
shortest representations. With that end in view, we describe the number of
methods for generating Fibonacci graph expressions and carry out their
comparative analysis.\medskip

Keywords: Fibonacci graph, series-parallel graph, two-terminal directed
acyclic graph, reduction.
\end{abstract}

\section{Introduction\label{intro}}

A \textit{graph }$G=(V,E)$ consists of a \textit{vertex set\ }$V$ and an 
\textit{edge set\ }$E$, where each edge corresponds to a pair $(v,w)$ of
vertices. If the edges are ordered pairs of vertices (i.e., the pair $(v,w)$
is different from the pair $(w,v)$), then we call the graph \textit{directed}
or\textit{\ digraph}; otherwise, we call it \textit{undirected}. If $(v,w)$
is an edge in a digraph, we say that $(v,w)$ \textit{leaves} vertex $v$ and 
\textit{enters} vertex $w$. In a digraph, the \textit{out-degree} of a
vertex is the number of edges leaving it, and the \textit{in-degree} of a
vertex is the number of edges entering it. A vertex in a digraph is a 
\textit{source} if no edges enter it, and a \textit{sink} if no edges leave
it.

A \textit{path} from vertex $v_{0}$ to vertex $v_{k}$ in a graph $G=(V,E)$
is a sequence of its vertices $\left[ v_{0},v_{1},v_{2},\ldots ,v_{k-1},v_{k}%
\right] $ such that $(v_{i-1},v_{i})\in E$ for $1\leq i\leq k$. $G$ is an 
\textit{acyclic graph} if there is no closed path $\left[ v_{0},v_{1},v_{2},%
\ldots ,v_{k},v_{0}\right] $ in $G$. A two-terminal directed acyclic graph (%
\textit{st-dag}) has only one source $s$ and only one sink $t$. In an
st-dag, every vertex lies on some path from $s$ to $t$.

A graph $G^{\prime }=(V^{\prime },E^{\prime })$ is a \textit{subgraph} of $%
G=(V,E)$ if $V^{\prime }\subseteq V$ and $E^{\prime }\subseteq E$. A graph $%
G $ is \textit{homeomorphic} to a graph $G^{\prime }$ (a \textit{homeomorph}
of $G^{\prime }$) if $G$ can be obtained by subdividing edges of $G^{\prime
} $ with new vertices.

We consider a \textit{labeled graph} which has labels attached to its edges.
Each path between the source and the sink (a \textit{sequential path}) in an
st-dag can be presented by a product of all edge labels of the path. We
define the sum of edge label products corresponding to all possible
sequential paths of an st-dag $G$ as the \textit{canonical expression }of $G$%
. An algebraic expression is called an \textit{st-dag expression} (a \textit{%
factoring of an st-dag} in \cite{BKS}) if it is algebraically equivalent to
the canonical expression of an st-dag. An st-dag expression consists of
terms (edge labels), the operators $+$ (disjoint union) and $\cdot $
(concatenation, also denoted by juxtaposition when no ambiguity arises), and
parentheses.

We define the \textit{complexity of an algebraic expression} in two ways.
The complexity of an algebraic expression is (i) the total number of terms
in the expression including all their appearances (\textit{the first
complexity characteristic}) or (ii) the number of plus operators in the
expression (\textit{the second complexity characteristic}). We will denote
the first and the second complexity characteristic of an st-dag expression
by $T(n)$ and $P(n)$, respectively, where $n$ is the number of vertices in
the graph (the \textit{size of the graph}).

An equivalent expression with the minimum complexity is called an \textit{%
optimal representation of the algebraic expression}.

A \textit{series-parallel} \textit{graph} is defined recursively as follows:

(i) A single edge $(u,v)$ is a series-parallel graph with source $u$ and
sink $v$.

(ii) If $G_{1}$ and $G_{2}$ are series-parallel graphs, so is the graph
obtained by either of the following operations:

\quad (a) Parallel composition: identify the source of $G_{1}$ with the
source of $G_{2}$ and the sink of $G_{1}$ with the sink of $G_{2}$.

\quad (b) Series composition: identify the sink of $G_{1}$ with the source
of $G_{2}$.

As shown in \cite{BKS} and \cite{KoL}, a series-parallel graph expression
has a representation in which each term appears only once. We proved in \cite%
{KoL} that this representation is an optimal representation of the
series-parallel graph expression from the perspective of the first
complexity characteristic. For example, the st-dag expression of the
series-parallel graph presented in Figure \ref{fig1} is $%
abd+abe+acd+ace+fe+fd$. Since it is a series-parallel graph, the expression
can be reduced to $(a(b+c)+f)(d+e)$, where each term appears once.

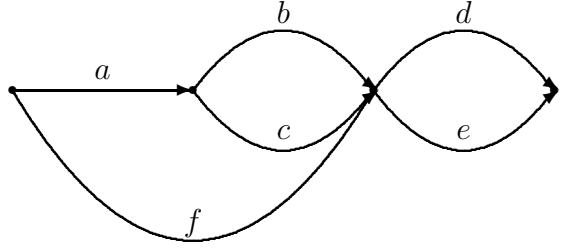
\begin{figure}[tbp]
\setlength{\unitlength}{0.8cm}
\par
\begin{picture}(5,4)(-3.5,0)\thicklines

\multiput(1,3)(3,0){4}{\circle*{0.15}}

\put(1,3){\vector(1,0){3}} \put(2.5,3.3){\makebox(0,0){$a$}}

\qbezier(4,3)(5.5,5)(7,3) \put(7.085,3){\vector(3,-2){0}}
\put(5.5,4.3){\makebox(0,0){$b$}}

\qbezier(4,3)(5.5,1)(7,3) \put(7.085,3){\vector(3,2){0}}
\put(5.5,2.3){\makebox(0,0){$c$}}

\qbezier(7,3)(8.5,5)(10,3) \put(10.085,3){\vector(3,-2){0}}
\put(8.5,4.3){\makebox(0,0){$d$}}

\qbezier(7,3)(8.5,1)(10,3) \put(10.085,3){\vector(3,2){0}}
\put(8.5,2.3){\makebox(0,0){$e$}}

\qbezier(1,3)(4,-2)(7,3) \put(7.085,3){\vector(4,3){0}}
\put(4,0.8){\makebox(0,0){$f$}}

\end{picture}
\caption{A series-parallel graph.}
\label{fig1}
\end{figure}

The notion of a \textit{Fibonacci graph }($FG$) was introduced in \cite{GoP}%
. A Fibonacci graph has vertices $\{1,2,3,\ldots ,n\}$ and edges 
\begin{equation*}
\left\{ \left( v,v+1\right) \mid v=1,2,\ldots ,n-1\right\} \cup \left\{
\left( v,v+2\right) \mid v=1,2,\ldots ,n-2\right\} .
\end{equation*}%
This graph is illustrated in Figure \ref{fig2}.

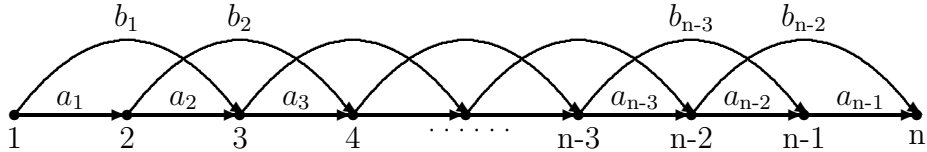
\begin{figure}[tbph]
\setlength{\unitlength}{1.0cm}
\par
\begin{picture}(5,2)(-0.9,-0.5)\thicklines

\multiput(0,0)(1.5,0){9}{\circle*{0.15}}

\put(0,-0.3){\makebox(0,0){1}}
\put(1.5,-0.3){\makebox(0,0){2}}
\put(3,-0.3){\makebox(0,0){3}}
\put(4.5,-0.3){\makebox(0,0){4}}
\put(7.5,-0.3){\makebox(0,0){n-3}}
\put(9,-0.3){\makebox(0,0){n-2}}
\put(10.5,-0.3){\makebox(0,0){n-1}}
\put(12,-0.3){\makebox(0,0){n}}

\multiput(0,0)(1.5,0){8}{\vector(1,0){1.5}}

\put(0.75,0.2){\makebox(0,0){$a_{1}$}}
\put(2.25,0.2){\makebox(0,0){$a_{2}$}}
\put(3.75,0.2){\makebox(0,0){$a_{3}$}}
\put(8.25,0.2){\makebox(0,0){$a_{\text{n-3}}$}}
\put(9.75,0.2){\makebox(0,0){$a_{\text{n-2}}$}}
\put(11.25,0.2){\makebox(0,0){$a_{\text{n-1}}$}}

\qbezier(0,0)(1.5,2)(3,0)
\qbezier(1.5,0)(3,2)(4.5,0)
\qbezier(3,0)(4.5,2)(6,0)
\qbezier(4.5,0)(6,2)(7.5,0)
\qbezier(6,0)(7.5,2)(9,0)
\qbezier(7.5,0)(9,2)(10.5,0)
\qbezier(9,0)(10.5,2)(12,0)

\multiput(3.085,0)(1.5,0){7}{\vector(3,-2){0}}

\put(1.5,1.3){\makebox(0,0){$b_{1}$}}
\put(3,1.3){\makebox(0,0){$b_{2}$}}
\put(9,1.3){\makebox(0,0){$b_{\text{n-3}}$}}
\put(10.5,1.3){\makebox(0,0){$b_{\text{n-2}}$}}

\multiput(5.55,-0.2)(0.2,0){6}{\circle*{0.02}}

\end{picture}
\caption{A Fibonacci graph.}
\label{fig2}
\end{figure}

As shown in \cite{Duf}, an st-dag is series-parallel if and only if it does
not contain a subgraph homeomorphic to the \textit{forbidden subgraph}
positioned between vertices $1$ and $4$ of the Fibonacci graph shown in
Figure \ref{fig2}. Thus, Fibonacci graphs are of interest as
\textquotedblright through\textquotedblright\ non-series-parallel st-dags.
Notice that Fibonacci graphs of size $2$ or $3$ are series-parallel.

Mutual relations between graphs and algebraic expressions are discussed in 
\cite{BKS}, \cite{GoM}, \cite{GMR}, \cite{KoL}, \cite{KoL1}, \cite{KoL2}, 
\cite{Mun1}, \cite{Mun2}, \cite{Nau}, \cite{SaW}, and other works.
Specifically, \cite{Mun1}, \cite{Mun2}, and \cite{SaW} consider the
correspondence between series-parallel graphs and read-once functions. A
Boolean function is defined as \textit{read-once} if it may be computed by
some formula in which no variable occurs more than once (\textit{read-once
formula}). On the other hand, a series-parallel graph expression can be
reduced to the representation in which each term appears only once. Hence,
such a representation of a series-parallel graph expression can be
considered to be a read-once formula (Boolean operations are replaced by
arithmetic ones).

An expression of a homeomorph of the forbidden subgraph belonging to any
non-series-parallel st-dag has no representation in which each term appears
once. For example, consider the subgraph positioned between vertices $1$ and 
$4$ of the Fibonacci graph shown in Figure \ref{fig2}. Possible optimal
representations of its expression are $a_{1}\left( a_{2}a_{3}+b_{2}\right)
+b_{1}a_{3}$ or $\left( a_{1}a_{2}+b_{1}\right) a_{3}+a_{1}b_{2}$. For this
reason, an expression of a non-series-parallel st-dag can not be represented
as a read-once formula. However, for arbitrary functions, which are not
read-once, generating the optimum factored form is NP-complete \cite{Wan}.
Some heuristic algorithms developed in order to obtain good factored forms
are described in \cite{GoM}, \cite{GMR} and other works. Therefore,
generating an optimal representation for a non-series-parallel st-dag
expression is a highly complex problem.

The problem of factoring boolean functions into shorter, more compact
formulae is one of the basic operations in algorithmic logic synthesis since
compactification saves money. In logic synthesis, one standard measure of
the complexity of a logic circuit is the number of terms. Computation time
also depends on the number of terms. However, computation time is determined
by the number of operations on terms as well. For this reason, the number of
plus operators is another important characteristic of a logic circuit.
Besides, the number of plus operators characterizes the number of
computation levels in a logic circuit (its \textquotedblright branching
out\textquotedblright\ degree).

Our intention in this paper is to simplify the expressions of Fibonacci
graphs (we denote them by $Ex(FG)$) and eventually find their optimal
representations. In \cite{KoL} we presented a heuristic algorithm with that
end in view and analyzed obtained expressions from the perspective of the
first complexity characteristic. Here we describe the number of methods for
generating Fibonacci graph expressions and carry out their comparative
analysis from the perspective of both the first and the second complexity
characteristics.

\section{Simple Methods\label{simple}}

This section considers three quite natural methods for generating
expressions of Fibonacci graphs.

\subsection{Sequential Paths Method\label{seq_paths}}

This method is based directly on the definition of an st-dag expression as
the canonical expression of the st-dag.

\begin{theorem}
\label{th_seq_exp}For an $n$-vertex $FG$:

1. The number of sequential paths $p(n)$ is defined recursively as follows: 
\begin{eqnarray}
p(1) &=&1  \notag \\
p(2) &=&1  \notag \\
p(n) &=&p(n-1)+p(n-2)\text{\quad }(n>2).  \label{fgf1}
\end{eqnarray}

2. The total number of terms $T(n)$ in the expression $Ex(FG)$ derived by
the sequential paths method is defined recursively as follows: 
\begin{eqnarray}
T(1) &=&0  \notag \\
T(2) &=&1  \notag \\
T(n) &=&T(n-1)+T(n-2)+p(n)\quad (n>2).  \label{fgf7}
\end{eqnarray}

3. The number of plus operators $P(n)$ in the expression $Ex(FG)$ derived by
the sequential paths method is defined recursively as follows: 
\begin{eqnarray}
P(1) &=&0  \notag \\
P(2) &=&0  \notag \\
P(n) &=&P(n-1)+P(n-2)+1\quad (n>2).  \label{fgf2}
\end{eqnarray}
\end{theorem}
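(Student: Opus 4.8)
The plan is to build the expression $Ex(FG)$ for an $n$-vertex Fibonacci graph recursively by conditioning on the first edge leaving the source vertex $1$, and to read off all three recurrences from that decomposition. Observe that every sequential path in $FG_n$ starts either with the edge $a_1 = (1,2)$ and then continues as a sequential path of the copy of $FG_{n-1}$ sitting on vertices $\{2,\ldots,n\}$, or with the edge $b_1 = (1,3)$ and then continues as a sequential path of the copy of $FG_{n-2}$ on vertices $\{3,\ldots,n\}$. Hence the canonical expression admits the factored form
\begin{equation*}
Ex(FG_n) = a_1\cdot Ex(FG_{n-1}) + b_1 \cdot Ex(FG_{n-2}) \quad (n>2),
\end{equation*}
with the base cases $Ex(FG_1)$ empty (one vertex, no path expression, so $T(1)=0$, $P(1)=0$, $p(1)=1$ counting the trivial path) and $Ex(FG_2) = a_1$ (so $T(2)=1$, $P(2)=0$, $p(2)=1$). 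This single structural identity is the engine for all three parts.

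First I would prove part 1. The number of sequential paths is additive over the top-level disjoint union: $p(n) = p(n-1) + p(n-2)$, since the $a_1$-branch contributes $p(n-1)$ paths and the $b_1$-branch contributes $p(n-2)$, and these two sets of paths are disjoint (they differ in their first edge). Together with $p(1)=p(2)=1$ this is exactly \eqref{fgf1}, and an easy induction confirms $p(n)$ is the $n$-th Fibonacci number. Next, for part 3, counting plus operators: the expression $a_1\cdot Ex(FG_{n-1}) + b_1\cdot Ex(FG_{n-2})$ has exactly one plus operator joining the two summands, plus all the plus operators internal to $Ex(FG_{n-1})$ and $Ex(FG_{n-2})$; concatenation introduces none. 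Hence $P(n) = P(n-1) + P(n-2) + 1$ for $n>2$, with $P(1)=P(2)=0$, which is \eqref{fgf2}. (One should check the base case $n=3$ by hand: $Ex(FG_3) = a_1(a_2 a_3 \cdots)$… — more precisely $Ex(FG_3)=a_1 a_2 + b_1$, which has one plus, matching $P(3)=P(2)+P(1)+1=1$.)

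For part 2, counting all term-occurrences, I would again use the decomposition: multiplying $Ex(FG_{n-1})$ by $a_1$ prepends the single term $a_1$ to each of its $p(n-1)$ sequential-path products, adding $p(n-1)$ term-occurrences on top of $T(n-1)$; likewise multiplying $Ex(FG_{n-2})$ by $b_1$ adds $p(n-2)$ occurrences on top of $T(n-2)$. Summing and using $p(n)=p(n-1)+p(n-2)$ gives
\begin{equation*}
T(n) = \bigl(T(n-1)+p(n-1)\bigr) + \bigl(T(n-2)+p(n-2)\bigr) = T(n-1)+T(n-2)+p(n),
\end{equation*}
with base cases $T(1)=0$, $T(2)=1$, which is \eqref{fgf7}. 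I would verify $n=3$: $T(3) = T(2)+T(1)+p(3) = 1+0+2 = 3$, matching $Ex(FG_3)=a_1 a_2 + b_1$, which indeed has $3$ term-occurrences.

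The only real subtlety — the "main obstacle," though it is mild — is being careful with the degenerate base cases, since $FG_1$ and $FG_2$ are not genuinely non-series-parallel and the "expression" of $FG_1$ is empty; one must check that the recurrences are applied only for $n>2$ and that $n=3$ (where the recursion first bottoms out onto both $FG_1$ and $FG_2$) is consistent, which the computations above confirm. Everything else is a routine induction on $n$ once the structural identity $Ex(FG_n) = a_1\,Ex(FG_{n-1}) + b_1\,Ex(FG_{n-2})$ is in place.
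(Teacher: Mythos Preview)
Your proposal is correct and follows essentially the same approach as the paper: both arguments condition on whether a sequential path begins with $a_1$ or $b_1$, identify the remaining path with a sequential path of the Fibonacci subgraph on $\{2,\ldots,n\}$ or $\{3,\ldots,n\}$, and read off the three recurrences from that bijection. The only cosmetic difference is in part~3: the paper observes directly that $P(n)=p(n)-1$ (since the canonical expression is a sum of $p(n)$ products joined by $p(n)-1$ plus signs) and then applies the recurrence for $p(n)$, whereas you count the single new plus operator in the top-level decomposition; the two arguments are equivalent.
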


\begin{proof}
1. Initial statements $p(1)=1$ and $p(2)=1$ follow clearly. All sequential
paths in a Fibonacci graph (see Figure \ref{fig2}) subdivide into two
groups. Paths of the first group start from the edge labeled $a_{1}$; paths
of the second group start from the edge labeled $b_{1}$. Paths of the first
group are all sequential paths of the $FG$ positioned between vertices $2$
and $n$ and are supplemented by an edge labeled $a_{1}$. This graph includes 
$n-1$ vertices, and, for this reason, the number of sequential paths in this
graph, and by extension, in the first group, is equal to $p(n-1)$. By
analogy, the number of sequential paths in the second group is equal to $%
p(n-2)$. Hence, the proof of the statement is complete.

2. Initial statements $T(1)=0$ and $T(2)=1$ follow clearly. Consider the
case of $n>2$. As was mentioned above, each sequential path of an $n$-vertex 
$FG$ is a sequential path of an $n-1$-vertex $FG$ or an $n-2$-vertex $FG$
which is supplemented by one edge. That is, each sequential path in an $n-1$%
-vertex $FG$ and an $n-2$-vertex $FG$ corresponds to an additional term in $%
T(n)$. Hence, 
\begin{eqnarray*}
T(n) &=&T(n-1)+p(n-1)+T(n-2)+p(n-2) \\
&=&T(n-1)+T(n-2)+p(n).
\end{eqnarray*}

3. Initial statements $P(1)=0$ and $P(2)=0$ follow clearly. Consider the
case of\textit{\ }$n>2$. Taking into consideration (\ref{fgf1}) and the
obvious equality $p(n)=P(n)+1$ for $n\geq 1$, formula (\ref{fgf2}) follows
immediately.\medskip
\end{proof}

\begin{remark}
\bigskip The number of sequential paths $p(n)$ in an $n$-vertex $FG$ is
equal to the Fibonacci number $F_{n}$ $\left( F_{1}=1,\text{ }F_{2}=1,\text{ 
}F_{n}=F_{n-1}+F_{n-2}\right) $.
\end{remark}

The following explicit formula for $F_{n}$ and, consequently, for $p(n)$ is
obtained by the method for linear recurrence relations solving \cite{Ros}
(henceforth, the method \cite{Ros}):%
\begin{equation}
p(n)=\frac{1}{\sqrt{5}}\left[ \left( \frac{1+\sqrt{5}}{2}\right) ^{n}-\left( 
\frac{1-\sqrt{5}}{2}\right) ^{n}\right] \text{\allowbreak }.  \label{fgf9}
\end{equation}

Using (\ref{fgf1}) and \ (\ref{fgf7}) the following recurrence for $T(n)$ is
derived:%
\begin{eqnarray}
T(1) &=&0  \notag \\
T(2) &=&1  \notag \\
T(3) &=&3  \notag \\
T(4) &=&7  \notag \\
T(n) &=&2T(n-1)+T(n-2)-2T(n-3)-T(n-4)\quad (n>4).  \label{fgf8}
\end{eqnarray}

\begin{corollary}
\label{cor_seq_exp}For an $n$-vertex $FG$:

1. The total number of terms $T(n)$ in the expression $Ex(FG)$ derived by
the sequential paths method is expressed explicitly as follows: 
\begin{equation*}
T(n)=\frac{1}{5}\left[ \left( \frac{1+\sqrt{5}}{2}n-\frac{3}{\sqrt{5}}%
\right) \left( \frac{1+\sqrt{5}}{2}\right) ^{n}+\left( \frac{1-\sqrt{5}}{2}n+%
\frac{3}{\sqrt{5}}\right) \left( \frac{1-\sqrt{5}}{2}\right) ^{n}\right] .
\end{equation*}

2. The number of plus operators $P(n)$ in the expression $Ex(FG)$ derived by
the sequential paths method is expressed explicitly as follows: 
\begin{equation*}
P(n)=\frac{1}{\sqrt{5}}\left[ \left( \frac{1+\sqrt{5}}{2}\right) ^{n}-\left( 
\frac{1-\sqrt{5}}{2}\right) ^{n}\text{\allowbreak }\right] -1.
\end{equation*}
\end{corollary}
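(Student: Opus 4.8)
The plan is to take the two parts separately, in increasing order of effort. Part 2 is essentially already proved: the proof of Theorem \ref{th_seq_exp} records the equality $p(n)=P(n)+1$, so $P(n)=p(n)-1$, and since $p(n)$ is the Fibonacci number $F_{n}$ (the Remark), substituting the explicit formula (\ref{fgf9}) for $p(n)$ yields exactly the asserted closed form for $P(n)$, with nothing further to check.

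For part 1, I would apply the method \cite{Ros} to the recurrence (\ref{fgf8}). Its characteristic equation is $x^{4}-2x^{3}-x^{2}+2x+1=0$, and the decisive observation is the factorization $x^{4}-2x^{3}-x^{2}+2x+1=(x^{2}-x-1)^{2}$. Hence the characteristic equation has precisely the two \emph{double} roots $\alpha=\tfrac{1+\sqrt5}{2}$ and $\beta=\tfrac{1-\sqrt5}{2}$ --- the same golden-ratio values that govern $p(n)$ in (\ref{fgf9}) --- so the general solution of (\ref{fgf8}) has the form $T(n)=(A+Bn)\alpha^{n}+(C+Dn)\beta^{n}$. The constants $A,B,C,D$ are then forced by the four initial values $T(1)=0$, $T(2)=1$, $T(3)=3$, $T(4)=7$ from (\ref{fgf8}); solving the resulting $4\times4$ linear system, with the rational part kept separate from the part carrying $\sqrt5$, gives $B=\alpha/5$, $D=\beta/5$, $A=-3/(5\sqrt5)$, $C=3/(5\sqrt5)$, which rearranges to precisely the claimed expression.

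I do not expect a genuine obstacle here: the only real subtlety is spotting the perfect-square factorization, which is what forces the linear-in-$n$ coefficients, and everything after that is routine bookkeeping. A cleaner route for the write-up, which I would probably prefer, is to bypass the derivation and instead \emph{verify} the stated closed form $g(n)$ directly --- check that $g(1)=0$ and $g(2)=1$, and then that $g$ satisfies the basic recurrence (\ref{fgf7}), namely $g(n)-g(n-1)-g(n-2)=p(n)$ for $n>2$. Using $\alpha^{k}=\alpha^{k-1}+\alpha^{k-2}$ together with the analogous identity for $\beta$, this last step collapses to the elementary relations $\alpha^{2}+1=\sqrt5\,\alpha$ and $\beta^{2}+1=-\sqrt5\,\beta$ combined with (\ref{fgf9}); since (\ref{fgf7}) together with the two initial values $T(1)=0$, $T(2)=1$ determines the sequence uniquely, it follows that $T(n)=g(n)$.
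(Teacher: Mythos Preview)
Your proposal is correct and follows essentially the same approach as the paper: for part 2 the paper invokes exactly the identity $p(n)=P(n)+1$ together with (\ref{fgf9}), and for part 1 the paper simply says the proof is based on the recurrence (\ref{fgf8}) and the standard method \cite{Ros}, which is precisely what you carry out in detail (including the key factorization $(x^{2}-x-1)^{2}$ that the paper leaves implicit). Your alternative verification via (\ref{fgf7}) is a pleasant shortcut the paper does not mention, but it does not change the overall strategy.
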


\begin{proof}
\bigskip 1. The proof is based on relation (\ref{fg_fig8}) and on the method 
\cite{Ros}.

2. The proof follows immediately from (\ref{fgf9}) and the above mentioned
equality $p(n)=P(n)+1$.
\end{proof}

For $n=9$, the corresponding algebraic expression is 
\begin{eqnarray*}
&&a_{1}a_{2}a_{3}a_{4}a_{5}a_{6}a_{7}a_{8}+a_{1}a_{2}a_{3}a_{4}a_{5}a_{6}b_{7}+a_{1}a_{2}a_{3}a_{4}a_{5}b_{6}a_{8}+a_{1}a_{2}a_{3}a_{4}b_{5}a_{7}a_{8}+
\\
&&a_{1}a_{2}a_{3}a_{4}b_{5}b_{7}+a_{1}a_{2}a_{3}b_{4}a_{6}a_{7}a_{8}+a_{1}a_{2}a_{3}b_{4}a_{6}b_{7}+a_{1}a_{2}a_{3}b_{4}b_{6}a_{8}+
\\
&&a_{1}a_{2}b_{3}a_{5}a_{6}a_{7}a_{8}+a_{1}a_{2}b_{3}a_{5}a_{6}b_{7}+a_{1}a_{2}b_{3}a_{5}b_{6}a_{8}+a_{1}a_{2}b_{3}b_{5}a_{7}a_{8}+
\\
&&a_{1}a_{2}b_{3}b_{5}b_{7}+a_{1}b_{2}a_{4}a_{5}a_{6}a_{7}a_{8}+a_{1}b_{2}a_{4}a_{5}a_{6}b_{7}+a_{1}b_{2}a_{4}a_{5}b_{6}a_{8}+
\\
&&a_{1}b_{2}a_{4}b_{5}a_{7}a_{8}+a_{1}b_{2}a_{4}b_{5}b_{7}+a_{1}b_{2}b_{4}a_{6}a_{7}a_{8}+a_{1}b_{2}b_{4}a_{6}b_{7}+
\\
&&a_{1}b_{2}b_{4}b_{6}a_{8}+b_{1}a_{3}a_{4}a_{5}a_{6}a_{7}a_{8}+b_{1}a_{3}a_{4}a_{5}a_{6}b_{7}+b_{1}a_{3}a_{4}a_{5}b_{6}a_{8}+
\\
&&b_{1}a_{3}a_{4}b_{5}a_{7}a_{8}+b_{1}a_{3}a_{4}b_{5}b_{7}+b_{1}a_{3}b_{4}a_{6}a_{7}a_{8}+b_{1}a_{3}b_{4}a_{6}b_{7}+
\\
&&b_{1}a_{3}b_{4}b_{6}a_{8}+b_{1}b_{3}a_{5}a_{6}a_{7}a_{8}+b_{1}b_{3}a_{5}a_{6}b_{7}+b_{1}b_{3}a_{5}b_{6}a_{8}+
\\
&&b_{1}b_{3}b_{5}a_{7}a_{8}+b_{1}b_{3}b_{5}b_{7}.
\end{eqnarray*}%
It contains $34$ products (that correspond to $34$ sequential paths of the
graph), $201$ terms and $33$ plus operators.

\subsubsection{Time and Space Expenses of the Method}

The expression $Ex(FG)$ will be implemented in this and in other methods by
a linked list of the following characters: terms $a_{i}$ and $b_{i}$,
parentheses $"("$ and $")"$, and a sign $"+"$. Terms $a_{i}$ and $b_{i}$
conditionally considered as alone characters can be presented as character
sequences consisting of characters $"a"$ or $"b"$ and digits of number $i$.

We propose the following recursive algorithm which realizes the sequential
paths method:\medskip

$FG\_Sequential\_Paths(i,j,n,Arr,Expr)$

\begin{enumerate}
\item $\mathbf{if}$ $i$ $<$ $n$

\item \qquad $Arr[j]\longleftarrow a_{i}$

\item \label{seqp3}\qquad $FG\_Sequential\_Paths(i+1,j+1,n,Arr,Expr)$

\item \qquad $\mathbf{if}$ $i$ $<$ $n-1$

\item \qquad \qquad $Arr[j]\longleftarrow b_{i}$

\item \label{seqp6}\qquad \qquad $FG\_Sequential\_Paths(i+2,j+1,n,Arr,Expr)$

\item \label{seqp7}$\mathbf{else}$

\item \label{seqp8}$\qquad \mathbf{if}$ $\mathbf{not}$\textbf{\ }$\mathbf{%
empty}(Expr)$

\item \label{seqp9}$\qquad \qquad \mathbf{Insert\_to\_End}("+",Expr)$

\item \label{seqp10}$\qquad \mathbf{Copy\_to\_End}(Arr,j,Expr)$
\end{enumerate}

Given integers $i$ and $j$ and an auxiliary array $Arr$ of size $n-1$, this
procedure generates a linked list $Expr$ which implements the expression of
an $n$-vertex $FG$ derived by the sequential paths method. \ Array $Arr$ is
used to accumulate a product of terms corresponding to a current sequential
path. Integer $i$ is a number of a given vertex which edges labeled $a_{i}$
and $b_{i}$ leave, and $j$ is a subscript of an element in $Arr$. The
procedure is invoked with $i=j=1$ and an empty list $Expr$.

The procedure generates all admissible for our problem combinations of $%
a_{i} $ $(i=1,2,...,n-1)$ and $b_{i}$ $(i=1,2,...,n-2)$ \ After $i$ reaches $%
n$ (line \ref{seqp7}) a current combination has been composed and content of
the first $j$ elements of $Arr$ is copied at the end of $Expr$ (line \ref%
{seqp10}). If a derived product is not the first one in the expression,
i.e., $Expr$ is not an empty list, then a sign $"+"$ is inserted before the
product (lines \ref{seqp8}-\ref{seqp9}).

The running time of this algorithm consists of two components.

In the general case of the recursion, the running time, as follows from
lines \ref{seqp3} and \ref{seqp6}, is $t_{1}(n)=t_{1}(n-1)+t_{1}(n-2)+O(1)$,
i.e., the time complexity increases as Fibonacci numbers (see (\ref{fgf9})).

However, in the base case of the recursion, the time expenses are not
constant but are proportional to the size of a part of the expression copied
from $Arr$ to $Expr$. Therefore, this component of the algorithm's running
time is determined by the size of all the generated expression including all
terms and plus operators. As follows from Theorem \ref{th_seq_exp} and
Corollary \ref{cor_seq_exp}, the total number of terms is the most
significant part of the expression's size and, therefore, it states the
complexity of the time $t_{2}(n)$ for copying all parts of the expression
from $Arr$ to $Expr$. That is, by Corollary \ref{cor_seq_exp}, $%
t_{2}(n)=\Theta \left( n\left( \frac{1+\sqrt{5}}{2}\right) ^{n}\right) $.

Thus, the total running time of the algorithm is%
\begin{equation*}
t(n)=t_{1}(n)+t_{2}(n)=\Theta \left( \left( \frac{1+\sqrt{5}}{2}\right)
^{n}\right) +\Theta \left( n\left( \frac{1+\sqrt{5}}{2}\right) ^{n}\right)
=\Theta \left( n\left( \frac{1+\sqrt{5}}{2}\right) ^{n}\right) .
\end{equation*}

The algorithm uses only a $\Theta \left( n\right) $-size additional array
and so, the amount of memory it requires is determined by the size of the
derived expression and is also $\Theta \left( n\left( \frac{1+\sqrt{5}}{2}%
\right) ^{n}\right) $.

\subsection{Depth First Search (DFS) Method\label{dfs}}

An expression is derived by utilizing the well-known depth first search
algorithm \cite{CLR} and by using intermediate subexpressions which are
accumulated in st-dag's vertices. A subexpression which is accumulated in
vertex $i$ of the st-dag corresponds to its subgraph which is positioned
between vertices $i$ and $n$. The following recursive procedure is used:

\begin{enumerate}
\item \label{1dfs}The subexpression accumulated in vertex $n$ (see Figure %
\ref{fig2}) is equal to $1$.

\item The subexpression accumulated in vertex $n-1$ is equal to $a_{n-1}$.

\item The subexpression accumulated in vertex $i$ ($i<n-1$) is equal to $%
a_{i}E_{i+1}+b_{i}E_{i+2}$ where $E_{i+1}$ and $E_{i+2}$ are subexpressions
accumulated in vertices $i+1$ and $i+2$, respectively.

\item The subexpression accumulated in vertex $1$ is the resulting
expression.
\end{enumerate}

The special case of a subgraph consisting of a single vertex is considered
in line \ref{1dfs} of the recursive procedure. It is clear that such a
subgraph can be connected to other subgraphs only serially. For this reason,
it is accepted that its subexpression is $1$, so that, when it is multiplied
by another subexpression, the final result is not influenced.

\begin{theorem}
\label{th_dfs_exp}For an $n$-vertex $FG$:

1. The total number of terms $T(n)$ in the expression $Ex(FG)$ derived by
the DFS method is defined recursively as follows: 
\begin{eqnarray}
T(1) &=&0  \notag \\
T(2) &=&1  \notag \\
T(n) &=&T(n-1)+T(n-2)+2\quad (n>2).  \label{fgf3}
\end{eqnarray}

2. The number of plus operators $P(n)$ in the expression $Ex(FG)$ derived by
the DFS method is defined recursively as follows: 
\begin{eqnarray}
P(1) &=&0  \notag \\
P(2) &=&0  \notag \\
P(n) &=&P(n-1)+P(n-2)+1\quad (n>2).  \label{fgf4}
\end{eqnarray}
\end{theorem}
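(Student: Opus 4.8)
The plan is to analyze the recursive structure of the DFS method directly, mirroring the style of the proof of Theorem~\ref{th_seq_exp}. Recall that the subexpression $E_i$ accumulated in vertex $i$ (for $i<n-1$) is $a_iE_{i+1}+b_iE_{i+2}$, the subexpression $E_{n-1}$ is $a_{n-1}$, and $E_n$ is $1$. The resulting expression $Ex(FG)$ is $E_1$. The key observation is that the DFS method builds $E_1$ by \emph{reusing} the already-computed subexpressions $E_{i+1}$ and $E_{i+2}$: the formula $E_i = a_iE_{i+1}+b_iE_{i+2}$ wraps two copies of previously built subexpressions together with exactly two new terms ($a_i$ and $b_i$) and exactly one new plus operator.

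For part~1, I would argue as follows. The base cases $T(1)=0$ and $T(2)=1$ are immediate: the $1$-vertex graph has empty expression, and the $2$-vertex graph has expression $a_1$ with a single term. For $n>2$, observe that the expression $E_1$ for the $n$-vertex $FG$ has the form $a_1E_2+b_1E_3$, where $E_2$ is precisely the full expression of the $(n-1)$-vertex $FG$ (the subgraph on vertices $2,\ldots,n$ is itself a Fibonacci graph) and $E_3$ is precisely the full expression of the $(n-2)$-vertex $FG$ (the subgraph on vertices $3,\ldots,n$). Since neither subexpression is duplicated in this construction — each appears exactly once — the total number of terms is the number of terms in $E_2$, plus the number of terms in $E_3$, plus the two new terms $a_1$ and $b_1$; one must also check that when the recursion bottoms out at $E_n=1$, this contributes no term (the symbol $1$ is a multiplicative identity that gets absorbed, exactly as described in the paragraph following the procedure, so $a_{n-1}E_n$ simplifies to $a_{n-1}$ with no extra term). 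Hence $T(n)=T(n-1)+T(n-2)+2$, which is (\ref{fgf3}).

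For part~2, the argument is parallel and simpler. The base cases $P(1)=P(2)=0$ hold since $a_1$ and the empty expression contain no plus operator. For $n>2$, the expression $E_1=a_1E_2+b_1E_3$ introduces exactly one new plus operator (the one joining $a_1E_2$ and $b_1E_3$) on top of the plus operators already present inside $E_2$ and $E_3$, which — by the same subgraph identification as above — number $P(n-1)$ and $P(n-2)$ respectively. Therefore $P(n)=P(n-1)+P(n-2)+1$, which is (\ref{fgf4}). Note this recurrence coincides with (\ref{fgf2}), so the DFS and sequential-paths methods produce the same number of plus operators; this is a sanity check worth mentioning, since algebraically equivalent st-dag expressions built by distributing/factoring preserve the count of $+$ operators here.

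The main obstacle — really the only subtle point — is being careful about the treatment of the identity subexpression $E_n=1$ and, relatedly, the exact boundary of the recursion. One has to confirm that $E_{n-1}=a_{n-1}$ (not $a_{n-1}\cdot 1$ written out) so that the "$+2$ terms, $+1$ plus" bookkeeping is consistent at every level of the recursion and matches the stated base case $T(2)=1$ rather than overcounting. Once that is pinned down, both recurrences fall out immediately from the self-similar (Fibonacci) decomposition of the graph into its two trailing sub-Fibonacci-graphs, with no computation needed beyond checking the small base cases.
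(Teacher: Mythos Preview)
Your proposal is correct and follows essentially the same approach as the paper: decompose $E_1=a_1E_2+b_1E_3$, identify $E_2$ and $E_3$ with the DFS expressions of the $(n-1)$- and $(n-2)$-vertex Fibonacci subgraphs, and count the two extra terms and one extra plus. Your added care about the $E_n=1$ boundary is a nice touch the paper leaves implicit; the only remark I would drop is the aside that factoring/distributing ``preserves the count of $+$ operators,'' since that is not generally true and is not the reason the two recurrences coincide.
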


\begin{proof}
1. Initial statements $T(1)=0$ and $T(2)=1$ follow clearly. The resulting
expression $Ex(FG)$ is equal to $a_{1}E_{2}+b_{1}E_{3}$ where $E_{2}$ and $%
E_{3}$ are subexpressions accumulated in vertices $2$ and $3$, respectively
(see Figure \ref{fig2} and the DFS recursive procedure). $E_{2}$ is the
symbolic expression of the $FG$ which is positioned between vertices $2$ and 
$n$. This graph includes $n-1$ vertices and, for this reason, the total
number of terms in $E_{2}$ is equal to $T(n-1)$. By analogy, the total
number of terms in $E_{3}$ is equal to $T(n-2)$. Terms $a_{1}$ and $b_{1}$
are two additional terms in $Ex(FG)$. Hence, the proof of the statement is
complete.

2. This second proof is analogous to the first one. The expression $%
a_{1}E_{2}+b_{1}E_{3}$ includes all plus operations of $E_{2}$ and $E_{3}$
and one additional plus operation.\medskip
\end{proof}

\bigskip As follows from Theorems \ref{th_seq_exp} and \ref{th_dfs_exp}, a
method's evaluation depends on the kind of complexity that has been chosen.
If methods are compared by means of the second complexity characteristic,
then sequential paths and DFS methods are equivalent. However, from the
perspective of the first complexity characteristic, the DFS method is more
efficient.

\begin{corollary}
\label{col_dfs_exp}For an $n$-vertex $FG$:

1. The total number of terms $T(n)$ in the expression $Ex(FG)$ derived by
the DFS method is expressed explicitly as follows: 
\begin{equation*}
T(n)=\frac{1}{10}\left[ \left( 5+3\sqrt{5}\right) \left( \frac{1+\sqrt{5}}{2}%
\right) ^{n}+\left( 5-3\sqrt{5}\right) \left( \frac{1-\sqrt{5}}{2}\right)
^{n}\text{\allowbreak }\right] -2.
\end{equation*}

2. The number of plus operators $P(n)$ in the expression $Ex(FG)$ derived by
the DFS method is expressed explicitly as follows: 
\begin{equation*}
P(n)=\frac{1}{\sqrt{5}}\left[ \left( \frac{1+\sqrt{5}}{2}\right) ^{n}-\left( 
\frac{1-\sqrt{5}}{2}\right) ^{n}\text{\allowbreak }\right] -1.
\end{equation*}
\end{corollary}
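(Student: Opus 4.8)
The plan is to solve the two recurrences (\ref{fgf3}) and (\ref{fgf4}) as linear recurrence relations with constant coefficients, applying the method \cite{Ros} just as in the sequential paths case. Throughout write $\alpha=\frac{1+\sqrt5}{2}$ and $\beta=\frac{1-\sqrt5}{2}$ for the roots of the characteristic equation $x^{2}=x+1$, so that $\alpha^{2}=\alpha+1$, $\beta^{2}=\beta+1$, $\alpha+\beta=1$, and $\alpha-\beta=\sqrt5$.

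Part 2 requires almost no work: the recurrence (\ref{fgf4}) together with its initial conditions is identical to the recurrence (\ref{fgf2}) satisfied by $P(n)$ in the sequential paths method, so $P(n)$ has exactly the explicit form already established in Corollary \ref{cor_seq_exp}, part 2. Equivalently, setting $Q(n)=P(n)+1$ one checks $Q(1)=Q(2)=1$ and $Q(n)=Q(n-1)+Q(n-2)$ for $n>2$; hence $Q(n)=p(n)$ and, by the closed form (\ref{fgf9}), $P(n)=p(n)-1=\frac{1}{\sqrt5}\bigl[\alpha^{n}-\beta^{n}\bigr]-1$.

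For part 1 the recurrence (\ref{fgf3}) is inhomogeneous, so I would first absorb the constant term by the substitution $U(n)=T(n)+2$. Then $U(1)=2$, $U(2)=3$, and for $n>2$
\[
U(n)=T(n)+2=\bigl(T(n-1)+T(n-2)+2\bigr)+2=U(n-1)+U(n-2),
\]
so $U$ satisfies the homogeneous Fibonacci recurrence. Writing $U(n)=A\alpha^{n}+B\beta^{n}$ and imposing the two initial conditions gives the system $A\alpha+B\beta=2$ and $A\alpha^{2}+B\beta^{2}=3$; using $\alpha^{2}=\alpha+1$ and $\beta^{2}=\beta+1$ the second equation reduces to $A+B=1$, and solving the resulting system yields $A=\frac{5+3\sqrt5}{10}$, $B=\frac{5-3\sqrt5}{10}$. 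Hence $T(n)=U(n)-2=\frac{1}{10}\bigl[(5+3\sqrt5)\alpha^{n}+(5-3\sqrt5)\beta^{n}\bigr]-2$, which is the asserted formula.

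There is no genuine obstacle here; the only labor is the routine algebra of solving the $2\times 2$ system and tidying the radicals (equivalently, one may observe that $U(n)=F_{n+2}$, i.e.\ $T(n)=F_{n+2}-2$, and expand $F_{n+2}$ via (\ref{fgf9})). I would finish by checking the closed forms against $T(1)=0$, $T(2)=1$, $P(1)=P(2)=0$ and one further value generated by each recurrence as a consistency test.
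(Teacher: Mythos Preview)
Your proposal is correct and follows essentially the same approach as the paper: for part~2 you invoke the identity of the recurrences (\ref{fgf4}) and (\ref{fgf2}) exactly as the paper does, and for part~1 you carry out explicitly the standard linear-recurrence solution (the method \cite{Ros}) that the paper merely cites. Your substitution $U(n)=T(n)+2$ and the observation $U(n)=F_{n+2}$ are a clean way to execute that method; the paper gives no such detail.
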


\begin{proof}
\bigskip 1. The proof uses the recurrence obtained in Theorem \ref%
{th_dfs_exp} and is based on the method \cite{Ros}.

2. The proof follows immediately from Corollary \ref{cor_seq_exp} and the
equivalence of sequential paths and DFS methods from the perspective of the
second complexity characteristic.
\end{proof}

For $n=9$, the corresponding algebraic expression is 
\begin{eqnarray*}
&&a_{1}(a_{2}(a_{3}(a_{4}(a_{5}(a_{6}(a_{7}a_{8}+b_{7})+b_{6}a_{8})+b_{5}(a_{7}a_{8}+b_{7}))+
\\
&&b_{4}(a_{6}(a_{7}a_{8}+b_{7})+b_{6}a_{8}))+b_{3}(a_{5}(a_{6}(a_{7}a_{8}+b_{7})+b_{6}a_{8})+
\\
&&b_{5}(a_{7}a_{8}+b_{7})))+b_{2}(a_{4}(a_{5}(a_{6}(a_{7}a_{8}+b_{7})+b_{6}a_{8})+b_{5}(a_{7}a_{8}+b_{7}))+
\\
&&b_{4}(a_{6}(a_{7}a_{8}+b_{7})+b_{6}a_{8})))+b_{1}(a_{3}(a_{4}(a_{5}(a_{6}(a_{7}a_{8}+b_{7})+b_{6}a_{8})+
\\
&&b_{5}(a_{7}a_{8}+b_{7}))+b_{4}(a_{6}(a_{7}a_{8}+b_{7})+b_{6}a_{8}))+b_{3}(a_{5}(a_{6}(a_{7}a_{8}+b_{7})+b_{6}a_{8})+
\\
&&b_{5}(a_{7}a_{8}+b_{7}))).
\end{eqnarray*}%
It contains $87$ terms and $33$ plus operators.

Hence, this algorithm optimizes prefix parts of all subexpressions. In
principle, the DFS method can be applied by traversing the st-dag in the
opposite direction. In such a case, suffix parts of subexpressions are
optimized. Expression complexity characteristics will be the same.

\subsubsection{Time and Space Expenses of the Method\label{dfs_anal}}

We propose the following recursive algorithm which realizes the DFS\ method
in accordance with the above procedure:\medskip\ 

$FG\_DFS\_direct(i,n,Expr)$

\begin{enumerate}
\item $\mathbf{if}$ $i$ $<$ $n-1$

\item \label{dfsd2}\qquad $FG\_DFS\_direct(i+1,n,Expr)$

\item \qquad $\mathbf{if}$ $i$ $<$ $n-2$

\item $\qquad \qquad \mathbf{Insert\_to\_Head}\left( "(",Expr\right) $

\item $\qquad \qquad \mathbf{Insert\_to\_End}(")",Expr)$

\item \qquad $\mathbf{Insert\_to\_Head}\left( "a_{i}",Expr\right) $

\item \label{dfsd7}\qquad $FG\_DFS\_direct(i+2,n,Expr2)$

\item \qquad $\mathbf{if}$ $i$ $<$ $n-3$

\item \qquad \qquad $\mathbf{Insert\_to\_Head}\left( "(",Expr2\right) $

\item $\qquad \qquad \mathbf{Insert\_to\_End}(")",Expr2)$

\item \qquad $\mathbf{Insert\_to\_Head}\left( "+b_{i}",Expr2\right) $

\item \label{dfsd12}\qquad $\mathbf{Concatenate}(Expr,Expr2)$

\item $\mathbf{else}$

\item $\qquad \mathbf{if}$ $i$ $=$ $n-1$

\item \label{dfsd15}$\qquad \qquad Expr\leftarrow "a_{n-1}"$

\item $\qquad \mathbf{else}$

\item \label{dfsd17}$\qquad \qquad Expr\leftarrow \mathbf{NULL}$
\end{enumerate}

The algorithm generates a linked list $Expr$ which implements an expression
of a subgraph positioned between vertices $i$ and $n$. Parameter $i$ is
substituted by $1$ initially, for deriving the expression of an $n$-vertex $%
FG$.

Recursive calls in lines \ref{dfsd2} and \ref{dfsd7} of the algorithm
generate expressions for subgraphs with sources $i+1$ and $i+2$,
respectively. The first expression is presented as list $Expr$ and the
second one is presented as an additional list $Expr2$. After corresponding
insertions of terms $a_{i}$ and $b_{i}$, parentheses and a sign $"+"$, these
lists are concatenated in $O(1)$ time into the unified list $Expr$ (line \ref%
{dfsd12}) by assigning the address of the first element in $Expr2$ to the
pointer in the last element of $Expr$. In the base cases of the recursion, $%
Expr$ consists of the single term $a_{n-1}$ (line \ref{dfsd15}) or is an
empty list (line \ref{dfsd17}).

Thus, the running time of the algorithm is $t(n)=t(n-1)+t(n-2)+O(1)$, i.e.,
its complexity increases as Fibonacci numbers and $t(n)=\Theta \left( \left( 
\frac{1+\sqrt{5}}{2}\right) ^{n}\right) $.

The amount of memory that requires the algorithm is determined only by the
size of the derived expression and is also $\Theta \left( \left( \frac{1+%
\sqrt{5}}{2}\right) ^{n}\right) $.

The following algorithm realizes the DFS method applied in the opposite
direction:\medskip

$FG\_DFS\_opposite(i,n,Expr)$

\begin{enumerate}
\item $\mathbf{if}$ $i$ $<$ $n-1$

\item \qquad $FG\_DFS\_opposite(i,n-1,Expr)$

\item \qquad $\mathbf{if}$ $i$ $<$ $n-2$

\item \qquad \qquad $\mathbf{Insert\_to\_Head}("(",Expr)$

\item $\qquad \qquad \mathbf{Insert\_to\_End}\left( ")",Expr\right) $

\item $\qquad \mathbf{Insert\_to\_End}\left( "a_{n-1}+",Expr\right) $

\item \qquad $FG\_DFS\_opposite(i,n-2,Expr2)$

\item \qquad $\mathbf{if}$ $i$ $<$ $n-3$

\item \qquad \qquad $\mathbf{Insert\_to\_Head}("(",Expr2)$

\item $\qquad \qquad \mathbf{Insert\_to\_End}\left( ")",Expr2\right) $

\item $\qquad \mathbf{Insert\_to\_End}\left( "b_{n-2}",Expr2\right) $

\item \qquad $\mathbf{Concatenate}(Expr,Expr2)$

\item $\mathbf{else}$

\item $\qquad \mathbf{if}$ $i$ $=$ $n-1$

\item $\qquad \qquad Expr\leftarrow "a_{i}"$

\item $\qquad \mathbf{else}$

\item $\qquad \qquad Expr\leftarrow \mathbf{NULL}$
\end{enumerate}

It is clear that time and space expenses of this algorithm are the same as
of the previous one and are $\Theta \left( n\left( \frac{1+\sqrt{5}}{2}%
\right) ^{n}\right) $.

\subsection{Depth Last Search (DLS) Method\label{dls}}

While the DLS method is similar to the DFS method they differ in the fact
that an st-dag expression in the DLS method is derived by using special
subexpressions such as $a_{i}a_{i+1}+b_{i}$ which are related to
corresponding closed graph segments. The following recursive procedure is
used:

\begin{enumerate}
\item The subexpression accumulated in vertex $n$ is equal to $1$.

\item The subexpression accumulated in vertex $n-1$ is equal to $a_{n-1}$.

\item The subexpression accumulated in vertex $n-2$ is equal to $%
a_{n-2}a_{n-1}+b_{n-2}$.

\item The subexpression accumulated in vertex $i$ ($i<n-2$) is equal to $%
(a_{i}a_{i+1}+b_{i})E_{i+2}+a_{i}b_{i+1}E_{i+3}$ where $E_{i+2}$ and $%
E_{i+3} $ are subexpressions accumulated in vertices $i+2$ and $i+3$,
respectively.

\item The subexpression accumulated in vertex $1$ is the resulting
expression.
\end{enumerate}

\begin{theorem}
\label{th_dls_exp}For an $n$-vertex $FG$:

1. The total number of terms $T(n)$ in the expression $Ex(FG)$ derived by
the DLS method is defined recursively as follows: 
\begin{eqnarray*}
T(1) &=&0 \\
T(2) &=&1 \\
T(3) &=&3 \\
T(n) &=&T(n-2)+T(n-3)+5\quad (n>3).
\end{eqnarray*}

2. The number of plus operators $P(n)$ in the expression $Ex(FG)$ derived by
the DLS method is defined recursively as follows: 
\begin{eqnarray*}
P(1) &=&0 \\
P(2) &=&0 \\
P(3) &=&1 \\
P(n) &=&P(n-2)+P(n-3)+2\quad (n>3).
\end{eqnarray*}
\end{theorem}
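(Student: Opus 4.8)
The plan is to mimic the structure of the proofs of Theorems~\ref{th_seq_exp} and~\ref{th_dfs_exp}, decomposing the resulting expression of an $n$-vertex $FG$ according to the recursive DLS procedure and counting terms (respectively plus operators) contributed by each syntactic piece. First I would dispose of the base cases: for $n=1,2,3$ the accumulated subexpressions are $1$, $a_{n-1}$, and $a_{n-2}a_{n-1}+b_{n-2}$, which contain $0$, $1$, and $3$ terms and $0$, $0$, and $1$ plus operators, respectively, giving $T(1)=0$, $T(2)=1$, $T(3)=3$ and $P(1)=0$, $P(2)=0$, $P(3)=1$.

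For the inductive step with $n>3$, I would use the fact that the resulting expression is the subexpression accumulated in vertex $1$, namely $(a_{1}a_{2}+b_{1})E_{3}+a_{1}b_{2}E_{4}$, where $E_{3}$ and $E_{4}$ are the DLS subexpressions accumulated in vertices $3$ and $4$. Since the subgraph between vertices $3$ and $n$ is itself an $(n-2)$-vertex $FG$ and the subgraph between vertices $4$ and $n$ is an $(n-3)$-vertex $FG$, the numbers of terms in $E_3$ and $E_4$ are $T(n-2)$ and $T(n-3)$, and the numbers of plus operators in them are $P(n-2)$ and $P(n-3)$. It remains to count what the outer assembly $(a_{1}a_{2}+b_{1})(\cdot)+a_{1}b_{2}(\cdot)$ contributes on top of the two subexpressions: the literals $a_1,a_2,b_1,a_1,b_2$ are five new term occurrences, so $T(n)=T(n-2)+T(n-3)+5$; and the factor $(a_1a_2+b_1)$ carries one plus operator while the connective $+a_1b_2E_4$ carries a second, so $P(n)=P(n-2)+P(n-3)+2$. (I would double-check that multiplying $E_3$ by the prefix $(a_1a_2+b_1)$ and $E_4$ by $a_1b_2$ introduces no further $+$ signs, only $\cdot$'s, which is immediate since concatenation uses no plus operator.)

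The one point requiring a little care — and the most likely place for an off-by-one slip — is the boundary between the ``generic'' rule (item~4 of the DLS procedure, valid for $i<n-2$) and the special subexpression $a_{n-2}a_{n-1}+b_{n-2}$ at vertex $n-2$: I must confirm that for $n>3$ the recursion at vertex $1$ genuinely invokes vertices $3$ and $4$ with the correct identifications as smaller Fibonacci graphs, and in particular that the smallest new instances ($n=4$ giving $E_3$ of size $2$ and $E_4$ of size $1$, and $n=5$ giving $E_3$ of size $3$) are handled by the already-established base values rather than by the generic rule. Checking $T(4)=T(2)+T(1)+5=6$, $P(4)=P(2)+P(1)+2=2$ and $T(5)=T(3)+T(2)+5=9$, $P(5)=P(3)+P(2)+2=3$ directly against the written-out expression for vertex~$1$ in those cases will confirm the recurrences. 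Once the decomposition and this boundary check are in place, both recurrences follow, and the proof is complete. \rule{0.5em}{0.5em}
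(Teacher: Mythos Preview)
Your proposal is correct and follows essentially the same approach as the paper: decompose the expression at vertex~$1$ as $(a_{1}a_{2}+b_{1})E_{3}+a_{1}b_{2}E_{4}$, identify $E_{3}$ and $E_{4}$ with the DLS expressions of the $(n-2)$- and $(n-3)$-vertex sub-$FG$s, and count the five extra term occurrences and two extra plus operators contributed by the outer shell. Your explicit boundary checks at $n=4,5$ are more careful than the paper's treatment but not strictly necessary.
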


\begin{proof}
1. Initial statements $T(1)=0$, $T(2)=1$, and $T(3)=3$ follow clearly. The
resulting expression $Ex(FG)$ is equal to $%
(a_{1}a_{2}+b_{1})E_{3}+a_{1}b_{2}E_{4}$, where $E_{3}$ and $E_{4}$ are
subexpressions accumulated in the vertices $3$ and $4$, respectively (see
Figure \ref{fig2} and the DLS recursive procedure). $E_{3}$ is the symbolic
expression of the $FG$ which is positioned between vertices $3$ and $n$.
This graph includes $n-2$ vertices, and for this reason, the total number of
terms in $E_{3}$ is equal to $T(n-2)$. By analogy, the total number of terms
in $E_{4}$ is equal to $T(n-3)$. Terms $a_{1}$, $a_{2}$, $b_{1}$, $a_{1}$,
and $b_{2}$ are five additional terms in $Ex(FG)$. Hence, the proof of the
statement is complete.

2. This second proof is analogous to the first one. The expression $%
(a_{1}a_{2}+b_{1})E_{3}+a_{1}b_{2}E_{4}$ includes all plus operations of $%
E_{3}$ and $E_{4}$ and two additional plus operations.\medskip
\end{proof}

As follows from Theorems \ref{th_seq_exp}, \ref{th_dfs_exp}, and \ref%
{th_dls_exp}, the DLS method is more efficient than sequential paths and DFS
methods from the perspective of both complexity characteristics.

\begin{corollary}
\label{cor_dls_exp}For an $n$-vertex $FG$:

1. The total number of terms $T(n)$ in the expression $Ex(FG)$ derived by
the DLS method is expressed explicitly as follows: 
\begin{eqnarray*}
T(n) &\approx &3.\,\allowbreak 4912\left( 1.\,\allowbreak 324\,7\right)
^{n}+\left( -0.245\,46-0.0\,\allowbreak 449\,7i\right) \left(
-0.662\,36+0.562\,28i\right) ^{n}+ \\
&&\left( -0.245\,46+0.\,0\allowbreak 449\,7i\right) \left(
-0.662\,36-0.562\,28i\right) ^{n}-5
\end{eqnarray*}

or%
\begin{eqnarray*}
T(n) &\approx &3.\,\allowbreak 491\,2\left( 1.\,\allowbreak 324\,7\right)
^{n}+ \\
&&\left( -1\right) ^{n+1\allowbreak }\left( 0.868\,84\right) ^{n}\left[
0.491\,09\cos \left( 0.703\,86n\right) +0.08894\,2\sin \left(
0.703\,86n\right) \right] -5.
\end{eqnarray*}

2. The number of plus operators $P(n)$ in the expression $Ex(FG)$ derived by
the DLS method is expressed explicitly as follows: 
\begin{eqnarray*}
P(n) &\approx &1.\,\allowbreak 267\,2\left( 1.\,\allowbreak 324\,7\right)
^{n}+\left( -0.133\,62-0.128\,28i\right) \left( -0.662\,36+0.562\,28i\right)
^{n}+ \\
&&\left( -0.133\,62+0.128\,28i\right) \left( -0.662\,36-0.562\,28i\right)
^{n}-2
\end{eqnarray*}

or%
\begin{eqnarray*}
P(n) &\approx &1.2672\left( 1.\,\allowbreak 324\,7\right) ^{n}+ \\
&&\left( -1\right) ^{n+1\allowbreak }\left( 0.868\,84\right) ^{n}\left[
0.26724\cos \left( 0.703\,86n\right) +0.25655\sin \left( 0.703\,86n\right) %
\right] -2.
\end{eqnarray*}
\end{corollary}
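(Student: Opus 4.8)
The plan is to treat both parts as instances of solving a linear recurrence with constant coefficients and a constant forcing term, exactly as in the proofs of Corollaries~\ref{cor_seq_exp} and~\ref{col_dfs_exp}, i.e.\ by the method~\cite{Ros}. The one new feature is that the associated characteristic polynomial is cubic rather than quadratic, so its roots, and hence the coefficients of the closed form, are irrational and partly complex; this is why the stated formulas are only approximate. Note also that part~2 cannot be obtained cheaply from Corollary~\ref{col_dfs_exp} the way the DFS case was, because the DLS recurrence $P(n)=P(n-2)+P(n-3)+2$ is genuinely different from the sequential/DFS recurrence $P(n)=P(n-1)+P(n-2)+1$; the full computation is needed.

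First I would separate off the inhomogeneity. A constant particular solution $T^{*}$ of $T(n)=T(n-2)+T(n-3)+5$ must satisfy $T^{*}=2T^{*}+5$, hence $T^{*}=-5$; similarly the constant particular solution of $P(n)=P(n-2)+P(n-3)+2$ is $P^{*}=-2$. These account for the trailing $-5$ and $-2$. The common homogeneous recurrence $x(n)=x(n-2)+x(n-3)$ has characteristic equation $\lambda^{3}-\lambda-1=0$ (a Padovan/Perrin-type recurrence), obtained by substituting $x(n)=\lambda^{n}$ and dividing by $\lambda^{\,n-3}$. This cubic is irreducible over the rationals; it has one real root, the plastic number $\rho\approx 1.3247$, and a conjugate pair $z,\bar z$. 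Since the sum of the roots is $0$ and their product is $1$, one gets $\operatorname{Re}z=-\rho/2\approx-0.66236$ and $|z|=1/\sqrt{\rho}\approx 0.86884$, so $z\approx-0.66236+0.56228\,i$. The general solution of the inhomogeneous problem is thus $x(n)=A\rho^{n}+Bz^{n}+\bar B\bar z^{n}+(\text{the relevant constant})$.

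Next I would fix the coefficients by imposing the three initial conditions $T(1)=0,\ T(2)=1,\ T(3)=3$ (respectively $P(1)=0,\ P(2)=0,\ P(3)=1$); each gives a $3\times3$ linear system, solved numerically to yield $A\approx 3.4912$, $B\approx-0.24546-0.04497\,i$ for $T$ and $A\approx 1.2672$, $B\approx-0.13362-0.12828\,i$ for $P$. That is the first displayed form in each part. For the second form, write $z=|z|\,e^{i(\pi-\theta)}$ with $\theta=\arctan(0.56228/0.66236)\approx 0.70386$, so that $z^{n}=(-1)^{n}|z|^{n}e^{-in\theta}$ and hence $Bz^{n}+\bar B\bar z^{n}=2\operatorname{Re}(Bz^{n})=(-1)^{n+1}(0.86884)^{n}\bigl[\,c_{1}\cos(0.70386\,n)+c_{2}\sin(0.70386\,n)\,\bigr]$, with $c_{1}=-2\operatorname{Re}B$ and $c_{2}=-2\operatorname{Im}B$; this produces the constants $0.49109,\ 0.088942$ for $T$ and $0.26724,\ 0.25655$ for $P$.

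The routine-but-delicate step is the last one: because the characteristic polynomial does not factor over the rationals, everything must be carried numerically, and the passage from the complex-exponential form to the trigonometric form requires care with the sign bookkeeping (the $(-1)^{n+1}$ coming from $\operatorname{Re}z<0$). As a sanity check I would verify that the closed forms reproduce $T(1),\dots,T(4)$ and $P(1),\dots,P(4)$; since the three homogeneous solutions span the solution space of the recurrence and the recurrence propagates forward for every $n>3$, agreement at these base values forces agreement for all $n$.
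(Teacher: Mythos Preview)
Your proposal is correct and follows exactly the approach the paper indicates: the paper's entire proof is the single sentence that the result ``uses the recurrences obtained in Theorem~\ref{th_dls_exp} and is based on the method~\cite{Ros},'' and you have simply written out that method in full (particular solution, characteristic cubic $\lambda^{3}-\lambda-1=0$, numerical determination of the roots and coefficients, and the complex-to-trigonometric rewrite). Your added detail---identifying the plastic-number root, deriving $\operatorname{Re}z$ and $|z|$ from Vieta, and the sign bookkeeping behind the $(-1)^{n+1}$---is all sound and goes beyond what the paper spells out, but the underlying strategy is identical.
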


The proof of Corollary \ref{cor_dls_exp} uses the recurrences obtained in
Theorem \ref{th_dls_exp} and is based on the method \cite{Ros}.

For $n=9$, the corresponding algebraic expression is 
\begin{eqnarray*}
&&(a_{1}a_{2}+b_{1})((a_{3}a_{4}+b_{3})((a_{5}a_{6}+b_{5})(a_{7}a_{8}+b_{7})+a_{5}b_{6}a_{8})+
\\
&&a_{3}b_{4}((a_{6}a_{7}+b_{6})a_{8}+a_{6}b_{7}))+a_{1}b_{2}((a_{4}a_{5}+b_{4})((a_{6}a_{7}+b_{6})a_{8}+a_{6}b_{7})+
\\
&&a_{4}b_{5}(a_{7}a_{8}+b_{7})).
\end{eqnarray*}%
It contains $39$ terms and $14$ plus operators.

Like the DFS method, the DLS method can be employed by traversing the $FG$
in the opposite direction.

\subsubsection{Time and Space Expenses of the Method}

The algorithms which realize the method applied in both the direct and the
opposite directions are similar to the above algorithms for the DFS method
(section \ref{dfs_anal}). Complexities of running time and memory required
by these algorithms are defined as in the previous methods by the same
recurrences as the expression size and, by Corollary \ref{cor_dls_exp}, are
about $\Theta \left( \left( 1.\,\allowbreak 324\,7\right) ^{n}\right) $.

\section{Reduction Method\label{reduct}}

This method is based on the idea of \textit{reduction}. A \textit{series
reduction} at vertex $v$ is possible when $a=(u,v)$ is the unique edge
entering $v$, and $b=(v,w)$ is the unique edge leaving $v$: then $a$ and $b$
are replaced by $(u,w)$. A \textit{parallel reduction} at vertices $v$, $w$
replaces two or more edges $a_{1},\ldots ,a_{k}$ joining $v$ to $w$ by a
single edge $(v,w)$. A \textit{node reduction} at $v$ can occur when $v$ has
in-degree or out-degree $1$ (a node reduction is a generalization of a
series reduction). Suppose $v$ has in-degree $1$, and let $a=(u,v)$ be the
edge entering $v$. Let $b_{1}=(v,w_{1}),\ldots ,b_{k}=(v,w_{k})$ be the
edges leaving $v$. Replace $\{a,b_{1},\ldots ,b_{k}\}$ by $\{g_{1},\ldots
,g_{k}\}$, where $g_{i}=(u,w_{i})$. We call such a reduction a \textit{fork
reduction}. The case where $v$ has out-degree $1$ is symmetric: here $%
a=(v,w) $, $b_{i}=(u_{i},v)$, and $g_{i}=(u_{i},w)$. We call such a
reduction a \textit{joint reduction}.

The algorithm for generating an st-dag expression from a sequence of series,
parallel, and node reductions is proposed in \cite{BKS}. From a sequence of
series, parallel, and node reductions, reducing an arbitrary st-dag $G$ to a
single edge, an expression of this st-dag can be obtained as follows. We
denote, for the sake of brevity, the label of every edge $\mathbf{e}$ before
every reduction as $\mathbf{e}$. Then, the new label for the edge resulting
from a series reduction of $a$ and $b$ is $ab$. For a parallel reduction,
the new edge is labeled $a_{1}+\ldots +a_{k}$. The new label for each edge
resulting from a node reduction is $ab_{i}$ for a fork reduction or $b_{i}a$
for a joint reduction. Node reductions are used until series and parallel
reductions are possible. Ultimately, the single edge to which $G$ is reduced
has a label giving an expression of $G$.

In relation to a Fibonacci graph that has more than $3$ vertices, this
algorithm is transformed to the following special procedure.

\begin{enumerate}
\item A fork reduction is done at the second vertex (from the left) (see
Figure \ref{fig2}) or a joint reduction is done at the last but one vertex
(from the left).

\item A parallel reduction is done at the first and the second vertices
(from the left) in the case of preceding fork reduction or at the last but
one and the last vertices (from the left) in the case of preceding joint
reduction.

\item If the resulting reduced $FG$ contains more than three vertices, then
we return to step $1$ of this algorithm. Otherwise, it is a series-parallel
graph which is reduced to a single edge by a series reduction at the second
vertex and a parallel reduction at the source and the sink. The single edge
is labeled by the resulting expression.
\end{enumerate}

The example of the reduction process in relation to a $6$-vertex $FG$ is
shown in Figure \ref{fg_fig4}. 
\begin{figure}[tbp]
\setlength{\unitlength}{1.0cm}
\par
\begin{picture}(5,2)(-7,-0.5)\thicklines

\put(-4,0){\makebox(0,0){Initial Fibonacci graph}}

\multiput(0,0)(1.5,0){6}{\circle*{0.15}}

\put(0,-0.3){\makebox(0,0){1}} \put(1.5,-0.3){\makebox(0,0){2}}
\put(3,-0.3){\makebox(0,0){3}} \put(4.5,-0.3){\makebox(0,0){4}}
\put(6,-0.3){\makebox(0,0){5}} \put(7.5,-0.3){\makebox(0,0){6}}

\multiput(0,0)(1.5,0){5}{\vector(1,0){1.5}}

\put(0.75,-0.2){\makebox(0,0){$a_{1}$}}
\put(2.25,-0.2){\makebox(0,0){$a_{2}$}}
\put(3.75,-0.2){\makebox(0,0){$a_{3}$}}
\put(5.25,-0.2){\makebox(0,0){$a_{4}$}}
\put(6.75,-0.2){\makebox(0,0){$a_{5}$}}

\qbezier(0,0)(1.5,2)(3,0) \qbezier(1.5,0)(3,2)(4.5,0)
\qbezier(3,0)(4.5,2)(6,0) \qbezier(4.5,0)(6,2)(7.5,0)

\multiput(3.085,0)(1.5,0){4}{\vector(3,-2){0}}

\put(1.5,1.3){\makebox(0,0){$b_{1}$}}
\put(3,1.3){\makebox(0,0){$b_{2}$}}
\put(4.5,1.3){\makebox(0,0){$b_{3}$}}
\put(6,1.3){\makebox(0,0){$b_{4}$}}

\end{picture}
\par
\begin{picture}(5,2)(-7,0.5)\thicklines

\put(-4,1.5){\makebox(0,0){Joint reduction at vertex 5}}

\multiput(0,0)(1.5,0){5}{\circle*{0.15}}

\put(0,-0.3){\makebox(0,0){1}} \put(1.5,-0.3){\makebox(0,0){2}}
\put(3,-0.3){\makebox(0,0){3}} \put(4.5,-0.3){\makebox(0,0){4}}
\put(6,-0.3){\makebox(0,0){6}}

\multiput(0,0)(1.5,0){4}{\vector(1,0){1.5}}

\put(0.75,-0.2){\makebox(0,0){$a_{1}$}}
\put(2.25,-0.2){\makebox(0,0){$a_{2}$}}
\put(3.75,-0.2){\makebox(0,0){$a_{3}$}}
\put(5.25,-0.2){\makebox(0,0){$a_{4}a_{5}$}}

\qbezier(0,0)(1.5,2)(3,0) \qbezier(1.5,0)(3,2)(4.5,0)
\qbezier(3,0)(4.5,2)(6,0) \qbezier(4.5,0)(5.25,1)(6,0)

\multiput(3.085,0)(1.5,0){3}{\vector(3,-2){0}}

\put(1.5,1.3){\makebox(0,0){$b_{1}$}}
\put(3,1.3){\makebox(0,0){$b_{2}$}}
\put(4.5,1.3){\makebox(0,0){$b_{3}a_{5}$}}
\put(5.25,0.75){\makebox(-1,-0.2){$b_{4}$}}

\end{picture}
\par
\begin{picture}(5,2)(-7,1.5)\thicklines

\put(-4,1.5){\makebox(0,0){Parallel reduction at vertices 4 and 6}}

\multiput(0,0)(1.5,0){5}{\circle*{0.15}}

\put(0,-0.3){\makebox(0,0){1}} \put(1.5,-0.3){\makebox(0,0){2}}
\put(3,-0.3){\makebox(0,0){3}} \put(4.5,-0.3){\makebox(0,0){4}}
\put(6,-0.3){\makebox(0,0){6}}

\multiput(0,0)(1.5,0){4}{\vector(1,0){1.5}}

\put(0.75,-0.2){\makebox(0,0){$a_{1}$}}
\put(2.25,-0.2){\makebox(0,0){$a_{2}$}}
\put(3.75,-0.2){\makebox(0,0){$a_{3}$}}
\put(5.25,-0.9){\makebox(0,0){$b_{4}+a_{4}a_{5}$}}

\thinlines \put(5,-0.7){\line(1,2){0.3}} \thicklines

\qbezier(0,0)(1.5,2)(3,0) \qbezier(1.5,0)(3,2)(4.5,0)
\qbezier(3,0)(4.5,2)(6,0)

\multiput(3.085,0)(1.5,0){3}{\vector(3,-2){0}}

\put(1.5,1.3){\makebox(0,0){$b_{1}$}}
\put(3,1.3){\makebox(0,0){$b_{2}$}}
\put(4.5,1.3){\makebox(0,0){$b_{3}a_{5}$}}

\end{picture}
\par
\begin{picture}(5,2)(-7,2.5)\thicklines

\put(-4,1.7){\makebox(0,0){Fork reduction at vertex 2}}
\put(-4,1.2){\makebox(0,0){Parallel reduction at vertices 1 and 3}}

\multiput(0,0)(1.5,0){4}{\circle*{0.15}}

\put(0,-0.3){\makebox(0,0){1}} \put(1.5,-0.3){\makebox(0,0){3}}
\put(3,-0.3){\makebox(0,0){4}} \put(4.5,-0.3){\makebox(0,0){6}}

\multiput(0,0)(1.5,0){3}{\vector(1,0){1.5}}

\put(0.75,-0.9){\makebox(0,0){$b_{1}+a_{1}a_{2}$}}
\put(2.25,-0.2){\makebox(0,0){$a_{3}$}}
\put(3.75,-0.9){\makebox(0,0){$b_{4}+a_{4}a_{5}$}}

\thinlines \put(0.5,-0.7){\line(1,2){0.3}}
\put(3.5,-0.7){\line(1,2){0.3}} \thicklines

\qbezier(0,0)(1.5,2)(3,0) \qbezier(1.5,0)(3,2)(4.5,0)

\multiput(3.085,0)(1.5,0){2}{\vector(3,-2){0}}

\put(1.5,1.3){\makebox(0,0){$a_{1}b_{2}$}}
\put(3,1.3){\makebox(0,0){$b_{3}a_{5}$}}

\end{picture}
\par
\begin{picture}(5,2)(-7,4)\thicklines

\put(-4,1.7){\makebox(0,0){Fork reduction at vertex 3}}
\put(-4,1.2){\makebox(0,0){Parallel reduction at vertices 1 and 4}}

\multiput(0,0)(2.25,0){3}{\circle*{0.15}}

\put(0,-0.3){\makebox(0,0){1}} \put(2.25,-0.3){\makebox(0,0){4}}
\put(4.5,-0.3){\makebox(0,0){6}}

\multiput(0,0)(2.25,0){2}{\vector(1,0){2.25}}

\put(0.2,-0.9){\makebox(0,0){$a_{1}b_{2}+(b_{1}+a_{1}a_{2})a_{3}$}}
\put(3.8,-0.9){\makebox(0,0){$b_{4}+a_{4}a_{5}$}}

\thinlines \put(0.5,-0.7){\line(1,2){0.3}}
\put(3.5,-0.7){\line(1,2){0.3}} \thicklines

\qbezier(0,0)(2.25,2)(4.5,0)

\multiput(4.585,0)(1.5,0){1}{\vector(3,-2){0}}

\put(2.25,1.3){\makebox(0,0){$(b_{1}+a_{1}a_{2})b_{3}a_{5}$}}

\end{picture}
\par
\begin{picture}(5,2)(-7,5.5)\thicklines

\put(-4,1.7){\makebox(0,0){Series reduction at vertex 4}}
\put(-4,1.2){\makebox(0,0){Parallel reduction at vertices 1 and 6}}

\multiput(0,0)(4.5,0){2}{\circle*{0.15}}

\put(0,-0.3){\makebox(0,0){1}} \put(4.5,-0.3){\makebox(0,0){6}}

\multiput(0,0)(4.5,0){1}{\vector(1,0){4.5}}

\put(2.25,0.4){\makebox(0,0){$(b_{1}+a_{1}a_{2})b_{3}a_{5}+(a_{1}b_{2}+(b_{1}+a_{1}a_{2})a_{3})(b_{4}+a_{4}a_{5}$)}}

\end{picture}
\par
\begin{picture}(5,6)(-7,0.5)\thicklines
\end{picture}
\caption{The example of a reduction process on a Fibonacci graph.}
\label{fg_fig4}
\end{figure}
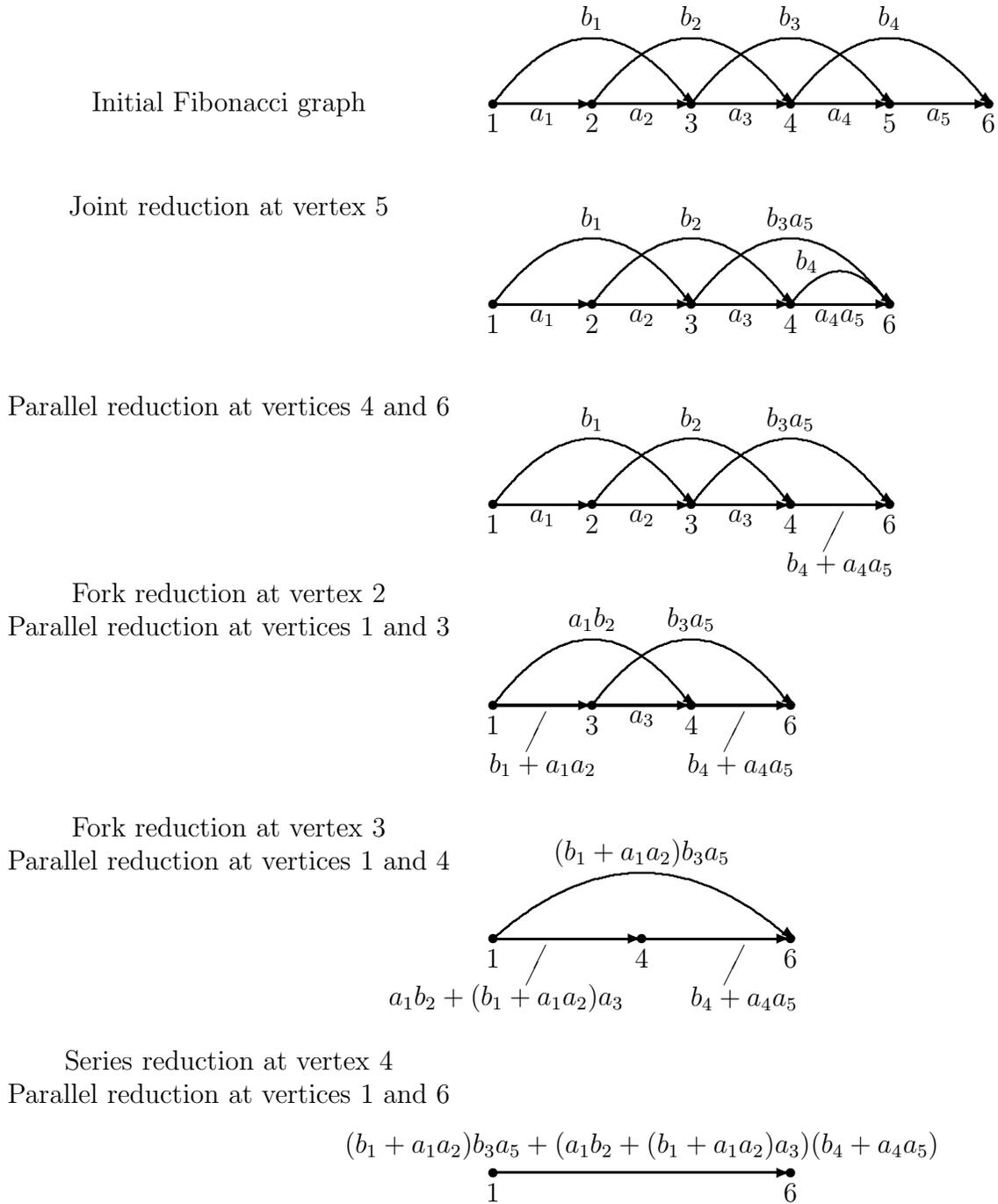

It is clear (see Figure \ref{fg_fig4}) that if the $FG$ contains $n$
vertices, then the number of applied node reductions is equal to $y=n-3$.
Thus, the reduction method applied to an $n$-vertex $FG$ includes $2^{n-3}$
possible reduction processes that are due to different numbers and execution
orders of fork and joint reductions. In the case of $n\leq 3$ the $n$-vertex 
$FG$ is a series-parallel graph and node reductions are not done. Our
intention is to find reduction processes which lead to expression
representation $Ex(FG)$ with a minimum complexity. We propose the following
algorithm:

\begin{enumerate}
\item $y\leftarrow n-3$

\item \label{r2}$\mathbf{if}$ $y$ $\func{mod}$ $2$ $=$ $0$

\item \qquad \label{r3}$fork\_count\leftarrow y/2\qquad
joint\_count\leftarrow y/2$

\item $\mathbf{else}$

\item $\qquad $\label{r5}$z\leftarrow \mathbf{rand}(2)$

\item \qquad $\mathbf{if}$ $z=1$

\item \qquad \qquad $fork\_count\leftarrow \left( y-1\right) /2\qquad
joint\_count\leftarrow \left( y+1\right) /2$

\item \qquad $\mathbf{else}$

\item \qquad \qquad \label{r9}$fork\_count\leftarrow \left( y+1\right)
/2\qquad joint\_count\leftarrow \left( y-1\right) /2$\qquad

\item \label{r10}$\mathbf{while}$ $fork\_count$ $>0$ $and$ $joint\_count$ $%
>0 $

\item \qquad $z\leftarrow \mathbf{rand}(2)$

\item \qquad $\mathbf{if}$ $z=1$

\item $\qquad \qquad \mathbf{apply}$\textbf{\ }$\mathbf{fork}$\textbf{\ }$%
\mathbf{and}$\textbf{\ }$\mathbf{parallel}$\textbf{\ }$\mathbf{reductions}$

\item \qquad \qquad $fork\_count\leftarrow fork\_count-1$

\item \qquad $\mathbf{else}$

\item $\qquad \qquad \mathbf{apply}$\textbf{\ }$\mathbf{joint}$\textbf{\ }$%
\mathbf{and}$\textbf{\ }$\mathbf{parallel}$\textbf{\ }$\mathbf{reductions}$

\item \label{r17}\qquad \qquad $joint\_count\leftarrow joint\_count-1$

\item \label{r18}$\mathbf{while}$ $fork\_count$ $>0$

\item $\qquad \mathbf{apply}$\textbf{\ }$\mathbf{fork}$\textbf{\ }$\mathbf{%
and}$\textbf{\ }$\mathbf{parallel}$\textbf{\ }$\mathbf{reductions}$

\item \label{r20}\qquad $fork\_count\leftarrow fork\_count-1$

\item \label{r21}$\mathbf{while}$ $joint\_count$ $>0$

\item $\qquad \mathbf{apply}$\textbf{\ }$\mathbf{joint}$\textbf{\ }$\mathbf{%
and}$\textbf{\ }$\mathbf{parallel}$\textbf{\ }$\mathbf{reductions}$

\item \label{r23}\qquad $joint\_count\leftarrow joint\_count-1$
\end{enumerate}

Lines \ref{r2} and \ref{r3} of the algorithm determine the numbers of fork
and joint reductions to be applied for even $y$ (odd $n$). For odd $y$ (even 
$n$) there are two possible values for the numbers of fork and joint
reductions. These values are determined in lines \ref{r5}-\ref{r9} by
function $\mathbf{rand}(2)$ which generates randomly $1$ or $2$. The $%
\mathbf{while}$ loop in lines \ref{r10}-\ref{r17} repeatedly applies a pair
of node and parallel reductions. A kind of a node reduction (fork or joint)
is also determined by function $\mathbf{rand}(2)$. After all possible joint
or all possible fork reductions are done, the remaining fork (lines \ref{r18}%
-\ref{r20}) or remaining joint (lines \ref{r21}-\ref{r23}) reductions,
respectively, are applied.

We define this algorithm as the \textit{optimal reduction method}. $\qquad
\qquad \quad ~$

\begin{theorem}
\label{th_fg_red_min}The minimum complexity representation (for terms and
for plus operators) among all possible expression representations $Ex(FG)$
derived by the reduction method for an $n$-vertex $FG$ ($n>3$) is achieved
by the optimal reduction method.
\end{theorem}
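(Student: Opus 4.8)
The plan is to separate two issues. First, I would show that a reduction process for an $n$-vertex $FG$ produces an expression whose two complexity characteristics depend only on the number $f$ of fork and $j=n-3-f$ of joint reductions used, not on the order in which they are applied; write $T(n,f)$ and $P(n,f)$ for these values. Second, I would show that $f\mapsto T(n,f)$ and $f\mapsto P(n,f)$ are both minimized exactly at the balanced values $f\in\{\lfloor(n-3)/2\rfloor,\lceil(n-3)/2\rceil\}$ that the optimal reduction method selects. Since, by the first part, the (possibly interleaved) word the algorithm emits has the same complexity as every word with the same $f$, the two parts together give the theorem.

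For the first part I would follow the reduction process through its intermediate $FG$-like graphs. A fork-plus-parallel step modifies only the two edges leaving the current source and the two or three left-most vertices, and a joint-plus-parallel step only the mirror-image right-most region; as long as at least one middle vertex separates the already-reduced left and right regions, a fork step and a joint step touch disjoint edge sets and hence commute. An induction on the number of steps then shows that after $f$ forks the source carries a pair of labels $(A_f,B_f)$ with
\[
A_k=B_{k-1}+A_{k-1}\alpha_k,\qquad B_k=A_{k-1}\beta_k,\qquad A_0=a_1,\ B_0=b_1,
\]
where $\alpha_k,\beta_k$ are the simple labels freshly uncovered at step $k$, and — by the forward–backward symmetry of the Fibonacci graph — that after $j$ joints the sink carries a symmetric pair $(C_j,D_j)$ with $C_0=a_{n-1}$, $D_0=b_{n-2}$ obeying the analogous recurrence, so that $C_j$ and $D_j$ have the same term- and plus-counts as $A_j$ and $B_j$. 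The only place where the left and right regions meet is the very last node reduction; one checks that there the interaction is forced, and that after the concluding series-plus-parallel reduction of the remaining three-vertex graph the expression is, for every admissible $(f,j)$,
\[
Ex(FG)=B_f\,C_{j-1}+A_f\,C_j
\]
(with $C_{-1}:=1$), which in particular does not depend on the order of the reductions. Establishing this identity, including the boundary interaction in the few small-graph configurations where no separating middle vertex exists, is the step I expect to carry the bulk of the work; it can be checked against the $6$-vertex example of Figure~\ref{fg_fig4}, where it gives $14$ terms and $5$ plus operators.

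Granting the displayed form, the rest is routine bookkeeping. Put $a_k=\tau(A_k)$ and $p_k=\pi(A_k)$ for the number of term occurrences, respectively of plus operators, in $A_k$; the recurrences above give $a_k=a_{k-1}+a_{k-2}+2$ with $a_{-1}=0,\ a_0=1$, and $p_k=p_{k-1}+p_{k-2}+1$ with $p_{-1}=0,\ p_0=0$, together with $\tau(B_k)=a_{k-1}+1$ and $\pi(B_k)=p_{k-1}$. Counting term occurrences and plus operators in $B_fC_{j-1}+A_fC_j$, with $j=n-3-f$, then yields
\[
T(n,f)=\phi(f)+\phi(n-3-f)+1,\qquad P(n,f)=\psi(f)+\psi(n-3-f)+1,
\]
where $\phi(k)=a_k+a_{k-1}$ and $\psi(k)=p_k+p_{k-1}$. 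Since $\phi(k)-\phi(k-1)=a_{k-1}+2$ and $\psi(k)-\psi(k-1)=p_{k-1}+1$ are strictly increasing in $k$ (the sequences $a_k$, $p_k$ being strictly increasing), $\phi$ and $\psi$ are strictly convex. For a strictly convex $\phi$ and $s=n-3$, the difference $\big(\phi(f+1)-\phi(f)\big)-\big(\phi(s-f)-\phi(s-f-1)\big)$ is negative for $f<\tfrac{s-1}{2}$, positive for $f>\tfrac{s-1}{2}$, and zero only if $s$ is odd and $f=\tfrac{s-1}{2}$; hence $f\mapsto T(n,f)$ attains its minimum precisely at $f\in\{\lfloor s/2\rfloor,\lceil s/2\rceil\}$, and the same argument with $\psi$ shows the plus-operator count is minimized on the very same set. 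These are exactly the fork-counts fixed by the optimal reduction method, so it simultaneously realizes the minimum for both complexity characteristics.
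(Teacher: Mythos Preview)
Your proposal is correct and follows essentially the same strategy as the paper: both establish that the fork-side and joint-side contributions are independent of each other and of the interleaving order, exploit the left--right symmetry of the Fibonacci graph so that the $k$-th fork and the $k$-th joint carry the same cost, and then minimize a sum of the form $\phi(f)+\phi(n-3-f)$ whose first differences are strictly increasing. The paper phrases the last step via a ``two equal stacks with growing elements'' picture rather than your explicit closed form $Ex(FG)=B_fC_{j-1}+A_fC_j$ and convexity of $\phi,\psi$, but these are the same argument in different packaging.
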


\begin{proof}
Actually, as follows from lines \ref{r2}-\ref{r9} of the optimal
decomposition method, we should prove that the minimum complexity
representation (for terms and for plus operators) is achieved if and only if
(i) the number of applied fork reductions is equal to the number of applied
joint reductions for odd $n$; (ii) the numbers of fork reductions and joint
reductions are distinguished by one for even $n$.

The initial number of terms on the graph edges is equal to the number of
edges. The initial number of plus operators is equal to $0$. Each pair of
node and parallel reductions leads to an increase in the total number of
terms and plus operators on the graph edges. This follows from the fact that
each node reduction leads to duplicate copies of corresponding terms and
subexpressions and to new plus operators. We should find the numbers of
terms and plus operators on the single edge to which the $FG$ is reduced and
analyze how these numbers increase in comparison to with their initial
values. The following basic points are used.\smallskip

\textbf{1.} The increment value of the total number of terms and plus
operators on the left side of the $FG$ increases as the number of fork
reductions increases; the increment value on the right side of the $FG$
increases as the number of joint reductions increases.\smallskip

Indeed, the current increment depends only on the duplicate subexpression
that labels the appointed edge. This edge is positioned between the first
and the second vertices in the case of a fork reduction, or between the last
but one and the last vertices in the case of a joint reduction. The
increment of the first complexity characteristic is equal to the total
number of terms in the duplicate subexpression. The increment of the second
complexity characteristic is equal to the number of plus operators in this
subexpression with an additional plus operator added. In the next step of
the reduction algorithm, the above mentioned edge is the result of parallel
reduction at two edges. One of these edges is labeled by the subexpression
including the duplicate subexpression of the previous step (see Figure \ref%
{fg_fig4}). Hence, the size of the duplicate subexpression for the next
reduction increases and, therefore, the increment value increases.\smallskip

\textbf{2.} The above mentioned increment values from the left side and from
the right side of the $FG$ are independent, \thinspace i.e., \thinspace the
\thinspace increment \thinspace value \thinspace after \thinspace a
\thinspace current \thinspace fork \thinspace reduction \thinspace does
\thinspace not \thinspace depend \thinspace on \thinspace the \thinspace
number \thinspace of \thinspace already \thinspace applied \thinspace joint
\thinspace reductions \thinspace and \thinspace vice \thinspace
versa.\smallskip

We prove the stronger statement which asserts that the above mentioned
duplicate subexpressions determining the increments are independent. In
every reduction step, we conditionally denote the edge labels of the reduced 
$FG$ as the edge labels of the initial $FG$ shown in Figure \ref{fig2}. In
such a case, in each reduction step, edges leaving the source of the reduced 
$FG$ are labeled $a_{1}$ and $b_{1}$, respectively, and edges entering the
sink of the reduced $FG$ are labeled $a_{n-1}$ and $b_{n-2}$, respectively.
Hence, $a_{1}$ is a current duplicate subexpression before a fork reduction
and $a_{n-1}$ is a current duplicate subexpression before a joint reduction.
The new (after a fork and parallel reduction) $a_{1}$ depends only on the
old (before a fork reduction) $a_{1}$, $a_{2}$, and $b_{1}$. The new (after
a joint and parallel reduction) $a_{n-1}$ depends only on the old (before a
joint reduction) $a_{n-1}$, $a_{n-2}$, and $b_{n-2}$. Labels $a_{2}$ and $%
a_{n-2}$ are initial terms always. They do not change and depend on nothing.
The new $b_{1}$ depends only on the old $a_{1}$ and $b_{2}$. The new $%
b_{n-2} $ depends only on the old $a_{n-1}$ and $b_{n-3}$. Labels $b_{2}$
and $b_{n-3}$ are also initial terms always, and depend on nothing. For this
reason, the new $a_{1}$ in no step depends on the old $a_{n-1}$ and $b_{n-2}$
and the new $a_{n-1}$ in no step depends on the old $a_{1}$ and $b_{1}$. On
the other hand, the $Ex(FG)$ accumulation takes place only in $a_{1}$ and $%
b_{1}$, and in $a_{n-1}$ and $b_{n-2}$. The pair of the fork and parallel
reductions influences only the new $a_{1}$ and $b_{1}$, and the pair of the
joint and parallel reductions influences only the new $a_{n-1}$ and $b_{n-2}$%
. All the above holds for all reduction steps including the last one, when $%
a_{1}$ and $a_{n-1}$ draw closer to one another. Figure \ref{fg_fig5}, where
the reduction method is applied to a $5$-vertex $FG$, illustrates this
phenomenon. Two possible algorithms in which either the fork reduction comes
first (and the joint reduction follows) or the joint reduction comes first
(and the fork reduction follows) are illustrated. As shown, both algorithms
lead to the same result. Labels on edges $(1,3)$ and $(3,5)$ of the
resulting $3$-edge st-dag have no common terms. That is, $a_{1}$ does not
depend on joint reductions and $a_{n-1}$ does not depend on fork reductions.
Therefore, duplicate subexpressions from the left side and from the
right\thinspace side are \thinspace independent, and,\thinspace thus,
corresponding increment values are independent as well.

\begin{figure}[tbph]
\setlength{\unitlength}{1.0cm}
\par
\begin{picture}(5,2)(-4,-0.5)\thicklines

\multiput(0,0)(1.5,0){5}{\circle*{0.15}}

\put(0,-0.3){\makebox(0,0){1}} \put(1.5,-0.3){\makebox(0,0){2}}
\put(3,-0.3){\makebox(0,0){3}} \put(4.5,-0.3){\makebox(0,0){4}}
\put(6,-0.3){\makebox(0,0){5}}

\multiput(0,0)(1.5,0){4}{\vector(1,0){1.5}}

\put(0.75,-0.2){\makebox(0,0){$a_{1}$}}
\put(2.25,-0.2){\makebox(0,0){$a_{2}$}}
\put(3.75,-0.2){\makebox(0,0){$a_{3}$}}
\put(5.25,-0.2){\makebox(0,0){$a_{4}$}}

\qbezier(0,0)(1.5,2)(3,0) \qbezier(1.5,0)(3,2)(4.5,0)
\qbezier(3,0)(4.5,2)(6,0)

\multiput(3.085,0)(1.5,0){3}{\vector(3,-2){0}}

\put(1.5,1.3){\makebox(0,0){$b_{1}$}}
\put(3,1.3){\makebox(0,0){$b_{2}$}}
\put(4.5,1.3){\makebox(0,0){$b_{3}$}}

\put(1.5,-1.2){\vector(-1,-1){1.5}}
\put(4.5,-1.2){\vector(1,-1){1.5}}

\put(-2,-1.3){\makebox(0,0){Fork red. at vertex 2}}
\put(-2,-1.8){\makebox(0,0){Par. red. at vertices 1 and 3}}
\put(8,-1.3){\makebox(0,0){Joint red. at vertex 4}}
\put(8,-1.8){\makebox(0,0){Par. red. at vertices 3 and 5}}

\end{picture}
\par
\begin{picture}(5,2)(-8.5,2.5)\thicklines

\multiput(0,0)(1.5,0){4}{\circle*{0.15}}

\put(0,-0.3){\makebox(0,0){1}} \put(1.5,-0.3){\makebox(0,0){2}}
\put(3,-0.3){\makebox(0,0){3}} \put(4.5,-0.3){\makebox(0,0){5}}

\multiput(0,0)(1.5,0){3}{\vector(1,0){1.5}}

\put(0.75,-0.2){\makebox(0,0){$a_{1}$}}
\put(2.25,-0.2){\makebox(0,0){$a_{2}$}}
\put(3.75,-0.9){\makebox(0,0){$b_{3}+a_{3}a_{4}$}}

\thinlines \put(3.5,-0.7){\line(1,2){0.3}} \thicklines

\qbezier(0,0)(1.5,2)(3,0) \qbezier(1.5,0)(3,2)(4.5,0)

\multiput(3.085,0)(1.5,0){2}{\vector(3,-2){0}}

\put(1.5,1.3){\makebox(0,0){$b_{1}$}}
\put(3,1.3){\makebox(0,0){$b_{2}a_{4}$}}

\put(1.5,-2.2){\vector(-1,-1){1.5}}

\put(3.5,-2.6){\makebox(0,0){Fork red. at vertex 2}}
\put(3.5,-3.1){\makebox(0,0){Par. red. at vertices 1 and 3}}

\end{picture}
\par
\begin{picture}(5,2)(-1,0.5)\thicklines

\multiput(0,0)(1.5,0){4}{\circle*{0.15}}

\put(0,-0.3){\makebox(0,0){1}} \put(1.5,-0.3){\makebox(0,0){3}}
\put(3,-0.3){\makebox(0,0){4}} \put(4.5,-0.3){\makebox(0,0){5}}

\multiput(0,0)(1.5,0){3}{\vector(1,0){1.5}}

\put(0.75,-0.9){\makebox(0,0){$b_{1}+a_{1}a_{2}$}}
\put(2.25,-0.2){\makebox(0,0){$a_{3}$}}
\put(3.75,-0.2){\makebox(0,0){$a_{4}$}}

\thinlines \put(0.5,-0.7){\line(1,2){0.3}} \thicklines

\qbezier(0,0)(1.5,2)(3,0) \qbezier(1.5,0)(3,2)(4.5,0)

\multiput(3.085,0)(1.5,0){2}{\vector(3,-2){0}}

\put(1.5,1.3){\makebox(0,0){$a_{1}b_{2}$}}
\put(3,1.3){\makebox(0,0){$b_{3}$}}

\put(3,-2.2){\vector(1,-1){1.5}}

\put(1,-2.6){\makebox(0,0){Joint red. at vertex 4}}
\put(1,-3.1){\makebox(0,0){Par. red. at vertices 3 and 5}}

\end{picture}
\par
\begin{picture}(5,2)(-4.75,4)\thicklines

\multiput(0,0)(2.25,0){3}{\circle*{0.15}}

\put(0,-0.3){\makebox(0,0){1}} \put(2.25,-0.3){\makebox(0,0){3}}
\put(4.5,-0.3){\makebox(0,0){5}}

\multiput(0,0)(2.25,0){2}{\vector(1,0){2.25}}

\put(1.125,-0.2){\makebox(0,0){$b_{1}+a_{1}a_{2}$}}
\put(3.375,-0.2){\makebox(0,0){$b_{3}+a_{3}a_{4}$}}

\qbezier(0,0)(2.25,2)(4.5,0)

\multiput(4.585,0)(1.5,0){1}{\vector(3,-2){0}}

\put(2.25,1.3){\makebox(0,0){$a_{1}b_{2}a_{4}$}}

\end{picture}
\par
\begin{picture}(5,4.5)(-7,0.5)\thicklines
\end{picture}
\caption{Two reduction algorithms on a $5$-vertex Fibonacci graph leading to
the same result.}
\label{fg_fig5}
\end{figure}
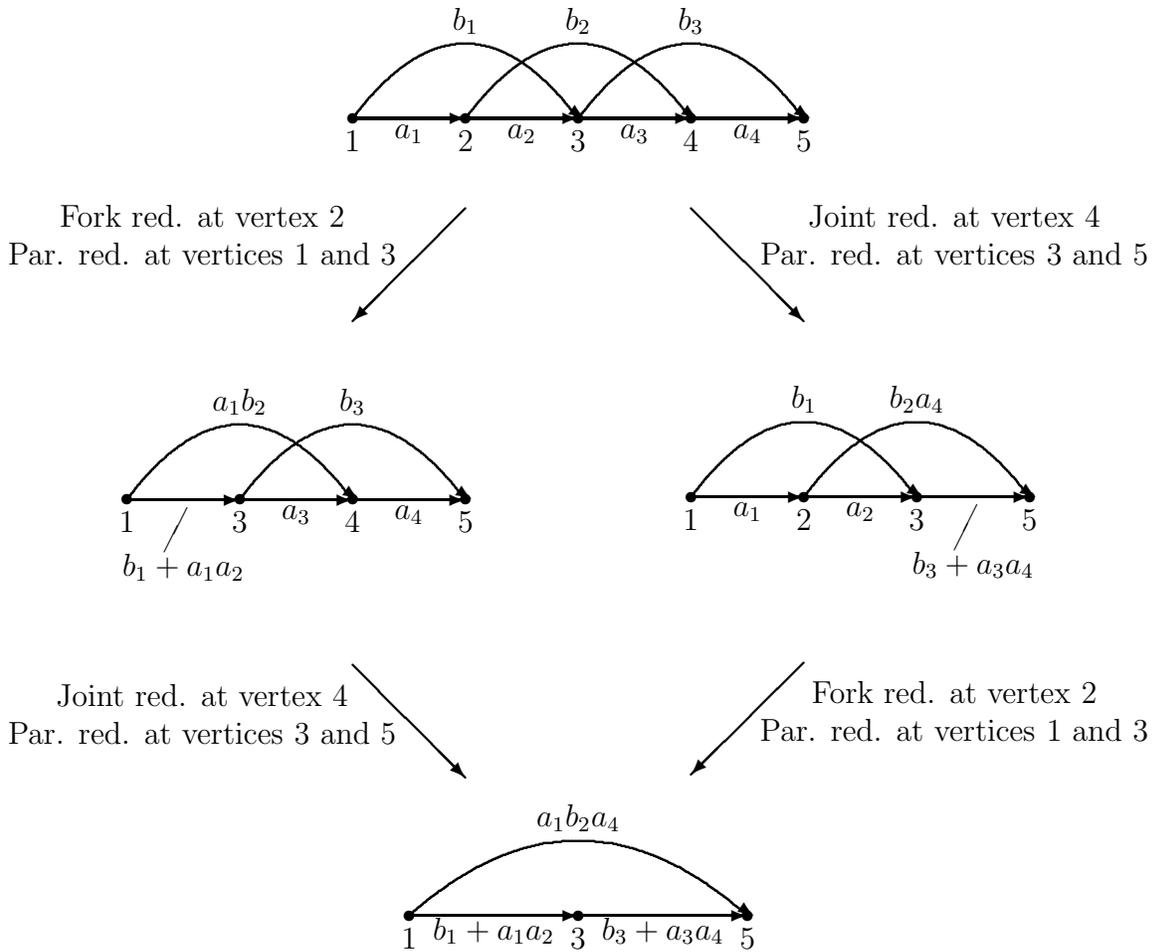

\textbf{3.} The increment value related to the $i$-th fork reduction in the $%
FG$ is equal to the increment value related to the $i$-th joint reduction in
the $FG$.\smallskip

This follows from the symmetrical structure of an $FG$ and from independence
of fork and joint reductions.\smallskip

As noted above, if the $FG$ contains $n$ vertices, then the number of
applied node reductions is equal to $y=n-3$. Since in each step, two kinds
of node reductions are possible, initially, the potential number of possible
node reductions is equal to $2y$ ($y$ fork and $y$ joint reductions). Hence,
we can present the reduction procedure as follows. There are two equal
stacks $S_{1}$ and $S_{2}$ (Figure \ref{fg_fig6}). Each of them contains $y$
elements. The size of an element in each stack increases from top to bottom.
The elements of the same level in $S_{1}$ and $S_{2}$ are of equal size.
Here a $Pop$ operation on $S_{1}$ corresponds to a fork reduction and a $Pop$
operation on $S_{2}$ corresponds to a joint reduction. The size of a stack
element corresponds to an increment value. We should put out $y$ elements
from two stacks. It is clear that for even $y$ (odd $n$) the total size of
pulled out elements will be minimum if and only if a $Pop$ operation is done 
$y/2$ times on each stack. For odd $y$ (even $n$), in order to ensure the
minimum total size, a $Pop$ operation should be done $(y-1)/2$ times on $%
S_{1}$ and $(y-1)/2+1$ times on $S_{2}$, or vice versa. Thus, the proof of
the theorem is complete. 
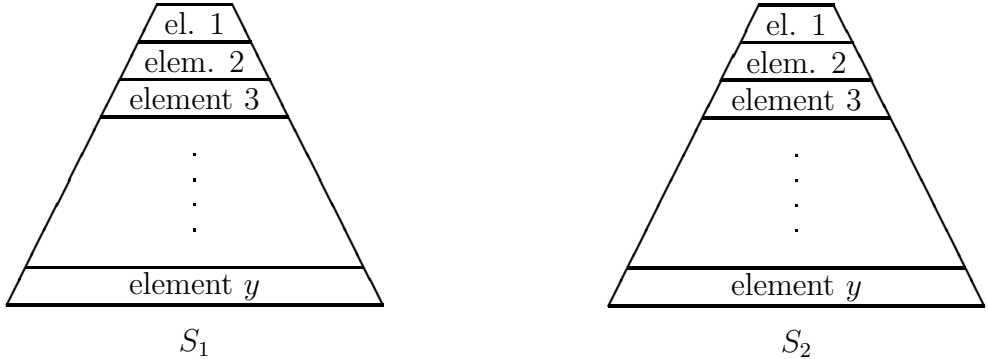
\begin{figure}[tbp]
\setlength{\unitlength}{1.0cm}
\par
\begin{picture}(12,5)(0,0.8)\thicklines

\drawline(2,5)(0,1)(5,1)(3,5)(2,5) \drawline(1.75,4.5)(3.25,4.5)
\drawline(1.5,4)(3.5,4) \drawline(1.25,3.5)(3.75,3.5)
\dottedline{0.3}(2.5,3)(2.5,2) \drawline(0.25,1.5)(4.75,1.5)

\put(2.5,4.75){\makebox(0,0){el. 1}}
\put(2.5,4.25){\makebox(0,0){elem. 2}}
\put(2.5,3.75){\makebox(0,0){element 3}}
\put(2.5,1.25){\makebox(0,0){element $y$}}

\put(2.5,0.5){\makebox(0,0){$S_{1}$}}

\end{picture}
\par
\begin{picture}(12,0)(-8,0.3)\thicklines

\drawline(2,5)(0,1)(5,1)(3,5)(2,5) \drawline(1.75,4.5)(3.25,4.5)
\drawline(1.5,4)(3.5,4) \drawline(1.25,3.5)(3.75,3.5)
\dottedline{0.3}(2.5,3)(2.5,2) \drawline(0.25,1.5)(4.75,1.5)

\put(2.5,4.75){\makebox(0,0){el. 1}}
\put(2.5,4.25){\makebox(0,0){elem. 2}}
\put(2.5,3.75){\makebox(0,0){element 3}}
\put(2.5,1.25){\makebox(0,0){element $y$}}

\put(2.5,0.5){\makebox(0,0){$S_{2}$}}

\end{picture}
\caption{Two equal stacks. Sizes of their elements correspond to increment
values.}
\label{fg_fig6}
\end{figure}
\end{proof}

\begin{proposition}
The expression representation $Ex(FG)$ derived by the reduction method does
not depend on the execution order of fork and joint reductions and depends
only on their number.
\end{proposition}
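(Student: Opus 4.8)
The plan is to recognize that this proposition is essentially a repackaging of point~2 in the proof of Theorem~\ref{th_fg_red_min}, promoted from a statement about increment \emph{sizes} to a statement about the actual labels. I would reuse the conditional relabeling convention introduced there: after each (node~$+$~parallel) reduction pair, re-christen the edges of the current reduced $FG$ by the labels of a Fibonacci graph of the current size, so that in every intermediate graph the two edges leaving the source carry the labels named $a_1$ and $b_1$ and the two edges entering the sink carry $a_{n-1}$ and $b_{n-2}$. Under this convention a fork~$+$~parallel reduction rewrites only the pair $(a_1,b_1)$ — the new $a_1$ being a fixed function of the old $a_1,a_2,b_1$ and the new $b_1$ a fixed function of the old $a_1,b_2$ — while a joint~$+$~parallel reduction symmetrically rewrites only $(a_{n-1},b_{n-2})$; the labels named $a_2,b_2,a_{n-2},b_{n-3}$ read by those steps are always untouched original edge labels.

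From this, commutativity of the two operations is immediate: a fork step reads and writes only the ``left state'' $\{a_1,b_1\}$ and a joint step reads and writes only the ``right state'' $\{a_{n-1},b_{n-2}\}$, and (away from the meeting phase) these reside on vertex-disjoint portions of the graph, so a fork followed by a joint produces exactly the same edge labels as the two performed in the opposite order. I would then run a bubble-sort argument: given any reduction sequence with $f$ fork and $j=n-3-f$ joint reductions, repeatedly transpose adjacent reductions of opposite type until all $f$ forks precede all $j$ joints; by commutativity each transposition leaves every edge label unchanged, so the labeled three-vertex graph reached after all $n-3$ node reductions is the canonical one determined by the pair $(f,j)$ alone. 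The residual graph is series-parallel, and by \cite{BKS} and \cite{KoL} its reduction to a single edge (one series, one parallel reduction) is order-independent and yields the label $L_{(1,2)}L_{(2,3)}+L_{(1,3)}$; hence $Ex(FG)$ depends only on $f$ (equivalently, on $(f,j)$) and not on the execution order. In particular every run of the optimal reduction method, which fixes $f$ and $j$ as functions of $n$, returns the same expression.

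The main obstacle, as always in such arguments, is the ``meeting'' phase: once the graph has shrunk to four or three vertices the left-state and right-state edges are no longer vertex-disjoint (in a four-vertex $FG$ the edge $b_1=(1,3)$ and the edge $a_{n-1}=(3,4)$ share vertex~$3$, and in the three-vertex $FG$ the middle edge is adjacent to both ends), so I must check that a late fork reduction still reads and alters no label that a pending joint reduction will use, and vice versa. This is exactly the configuration displayed in Figure~\ref{fg_fig5} for $n=5$, where the last fork and the last joint reductions commute and reach the same three-edge graph. In general I would argue that the last fork reduction only feeds the source-incident edge and the last joint reduction only feeds the sink-incident edge, that these two edges remain distinct until both reductions are complete, and that only series and parallel reductions — whose outcome on a series-parallel graph is order-independent by the cited results — remain afterward. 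Combining this with the commutation/bubble-sort argument for the earlier steps completes the proof.
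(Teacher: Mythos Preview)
Your proposal is correct and follows essentially the same approach as the paper: both arguments invoke point~2 of Theorem~\ref{th_fg_red_min}, observing that fork reductions update only the left-state labels $(a_1,b_1)$ and joint reductions only the right-state labels $(a_{n-1},b_{n-2})$, and both dispose of the meeting phase by appeal to the five-vertex example in Figure~\ref{fg_fig5}. Your explicit bubble-sort/transposition formalization is a nice way to make the order-independence precise, but it is an elaboration rather than a different idea.
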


\begin{proof}
The proof is similar to point 2 in the proof of Theorem \ref{th_fg_red_min}.
As noted, the $Ex(FG)$ accumulation takes place only in $a_{1}$ and $b_{1}$,
and in $a_{n-1}$ and $b_{n-2}$, i.e., on edges leaving the source and
entering the sink of the current reduced $FG$. Accumulation processes on
these pairs of edges are independent. As shown in Figure \ref{fg_fig5}, in
the last reduction steps, when the source and the sink draw closer to one
another, edge labels of the resulting $3$-edge st-dag do not depend also on
the order of the execution of reductions.\medskip
\end{proof}

\begin{remark}
The number of possible reduction processes included by the optimal reduction
method applied to an $n$-vertex $FG$ is equal to%
\begin{equation*}
\binom{n-3}{\left( n-3\right) /2}=\frac{\left( n-3\right) !}{\left( \left(
\left( n-3\right) /2\right) !\right) ^{2}}
\end{equation*}%
for odd $n$ and 
\begin{equation*}
\binom{n-3}{\left( n-4\right) /2}=\frac{\left( n-3\right) !}{\left( \left(
n-4\right) /2\right) !\left( \left( n-2\right) /2\right) !}
\end{equation*}%
for even $n$.

The different reduction processes are due to different execution orders of
fork and joint reductions.\medskip
\end{remark}

It may be noted that each pair of a fork and a parallel reduction
corresponds to two parallel recursion steps of the DFS method, each of which
is executed by traversing the $FG$ in opposite directions. These steps are
equivalent to an ordinary $Ex(FG)$ accumulation step on two edges, leaving
the source of the current reduced $FG$. By analogy, each pair of a joint and
a parallel reduction corresponds to two parallel recursion steps of the
direct DFS method. They are equivalent to an ordinary $Ex(FG)$ accumulation
step on two edges, entering the sink of the current reduced $FG$. Hence, the
reduction process in an $FG$ can be conditionally presented as four parallel
DFS processes on four subgraphs of this $FG$. The corresponding four
subexpressions are linked in the final step of the reduction procedure (see
Figure \ref{fg_fig4}). The resulting expression is constructed from the
following three elements: four subexpressions which are obtained by applying
the DFS method to four corresponding subgraphs of the $FG$; one additional
term $b_{l}$ ($l$ determines the place where the $FG$ is decomposed into
subgraphs, and depends on the number of fork and joint reductions); and one
additional plus operation.

If only fork or only joint reductions are applied to a Fibonacci graph, then
the reduction method gives the DFS method. Hence, the DFS method is a
special (worst from the perspective of the complexity) case of the reduction
method.

The optimal reduction method is equivalent to applying the DFS method to
four subgraphs which are revealed by decomposing the $FG$ in the middle.
This results in the following correlations between the reduction method and
the DFS method: 
\begin{eqnarray}
T_{r}(n) &=&T_{DFS}\left( \left\lceil \frac{n}{2}\right\rceil \right)
+T_{DFS}\left( \left\lfloor \frac{n}{2}\right\rfloor +1\right) +  \notag \\
&&T_{DFS}\left( \left\lceil \frac{n}{2}\right\rceil -1\right) +T_{DFS}\left(
\left\lfloor \frac{n}{2}\right\rfloor \right) +1\text{ }  \label{fgf5}
\end{eqnarray}%
\begin{eqnarray}
P_{r}(n) &=&P_{DFS}\left( \left\lceil \frac{n}{2}\right\rceil \right)
+P_{DFS}\left( \left\lfloor \frac{n}{2}\right\rfloor +1\right) +  \notag \\
&&P_{DFS}\left( \left\lceil \frac{n}{2}\right\rceil -1\right) +P_{DFS}\left(
\left\lfloor \frac{n}{2}\right\rfloor \right) +1.  \label{fgf6}
\end{eqnarray}%
Here $T_{r}(n)$ and $T_{DFS}(n)$ are the first complexity characteristics of 
$Ex(FG)$ which is derived by the optimal reduction method and the DFS
method, respectively, for an $n$-vertex $FG$; $P_{r}(n)$ and $P_{DFS}(n)$
are the second complexity characteristics of $Ex(FG)$ which is derived by
the optimal reduction method and the DFS methods, respectively, for an $n$%
-vertex $FG$.

\begin{theorem}
\label{th_red_exp}For an $n$-vertex $FG$:

1. The total number of terms $T(n)$ in the expression $Ex(FG)$ derived by
the optimal reduction method is defined recursively as follows: 
\begin{eqnarray*}
T(1) &=&0 \\
T(2) &=&1 \\
T(3) &=&3 \\
T(4) &=&6 \\
T(5) &=&9 \\
T(n) &=&T(n-2)+T(n-4)+7\quad (n>5).
\end{eqnarray*}%
2. The number of plus operators $P(n)$ in the expression $Ex(FG)$ derived by
the optimal reduction method is defined recursively as follows: 
\begin{eqnarray*}
P(1) &=&0 \\
P(2) &=&0 \\
P(3) &=&1 \\
P(4) &=&2 \\
P(5) &=&3 \\
P(n) &=&P(n-2)+P(n-4)+3\quad (n>5).
\end{eqnarray*}
\end{theorem}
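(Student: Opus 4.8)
The plan is to obtain both recurrences from the already-established correlations (\ref{fgf5}) and (\ref{fgf6}) between the optimal reduction method and the DFS method, combined with the DFS recurrences of Theorem~\ref{th_dfs_exp}. First I would settle the initial values. For $n\le 3$ the Fibonacci graph is series-parallel and $Ex(FG)$ is, respectively, empty, $a_1$, and $a_1a_2+b_1$, giving $T(1)=0,\ T(2)=1,\ T(3)=3$ and $P(1)=P(2)=0,\ P(3)=1$. For $n=4$ and $n=5$ I would run the reduction procedure of Section~\ref{reduct} explicitly; the $n=5$ case is the one reduced in Figure~\ref{fg_fig5}, which after the concluding series and parallel reductions yields $(b_1+a_1a_2)(b_3+a_3a_4)+a_1b_2a_4$ with $9$ terms and $3$ plus operators, while the $n=4$ case yields $(a_1a_2+b_1)a_3+a_1b_2$ with $6$ terms and $2$ plus operators. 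Equivalently, these base values follow from (\ref{fgf5})--(\ref{fgf6}) using $T_{DFS}(1)=0,\ T_{DFS}(2)=1,\ T_{DFS}(3)=3$ and $P_{DFS}(1)=P_{DFS}(2)=0,\ P_{DFS}(3)=1$.

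For the inductive step I would fix $n\ge 7$ and write $a=\lceil n/2\rceil$, $b=\lfloor n/2\rfloor$. Since $\lceil (n-2k)/2\rceil=a-k$ and $\lfloor (n-2k)/2\rfloor=b-k$, relation (\ref{fgf5}) expands $T_r(n)$, $T_r(n-2)$ and $T_r(n-4)$ each as a sum of four $T_{DFS}$-values plus $1$; forming $T_r(n)-T_r(n-2)-T_r(n-4)$ cancels most of these terms in pairs and leaves
\[
\bigl(T_{DFS}(a)-2T_{DFS}(a-2)-T_{DFS}(a-3)\bigr)+\bigl(T_{DFS}(b+1)-2T_{DFS}(b-1)-T_{DFS}(b-2)\bigr)-1 .
\]
Two applications of the DFS recurrence $T_{DFS}(m)=T_{DFS}(m-1)+T_{DFS}(m-2)+2$ (Theorem~\ref{th_dfs_exp}) collapse each parenthesised block to the constant $4$, so the difference equals $4+4-1=7$, i.e.\ $T(n)=T(n-2)+T(n-4)+7$. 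The assumption $n\ge 7$ is exactly what keeps every DFS value invoked here in the range $m>2$ where that recurrence holds, so the remaining case $n=6$ I would check separately: (\ref{fgf5}) gives $T_r(6)=T_{DFS}(3)+T_{DFS}(4)+T_{DFS}(2)+T_{DFS}(3)+1=3+6+1+3+1=14$, which indeed equals $T(4)+T(2)+7=6+1+7$.

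Part~2 is the same computation verbatim with $P$ replacing $T$ and the DFS recurrence $P_{DFS}(m)=P_{DFS}(m-1)+P_{DFS}(m-2)+1$: each of the blocks $P_{DFS}(a)-2P_{DFS}(a-2)-P_{DFS}(a-3)$ and $P_{DFS}(b+1)-2P_{DFS}(b-1)-P_{DFS}(b-2)$ collapses to $2$, so $P_r(n)-P_r(n-2)-P_r(n-4)=2+2-1=3$, and the boundary case $n=6$ is checked via (\ref{fgf6}): $P_r(6)=P_{DFS}(3)+P_{DFS}(4)+P_{DFS}(2)+P_{DFS}(3)+1=1+2+0+1+1=5=P(4)+P(2)+3$.

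I expect the only real friction to be the floor/ceiling bookkeeping together with pinning down which small $n$ fall outside the range where the DFS recurrences may be applied, which forces $n=6$ (and the genuinely small $n\le 5$) to be handled by hand rather than by the identity above. Everything else is routine cancellation, and since the structural content — that the optimal reduction method decomposes into four DFS runs on the two halves of the graph, yielding (\ref{fgf5}) and (\ref{fgf6}) — has already been argued in the text, no further reasoning about the reduction process itself is needed.
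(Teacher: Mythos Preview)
Your proof is correct and follows essentially the same route as the paper: both derive the recurrence from (\ref{fgf5})--(\ref{fgf6}) by expanding with the DFS recurrence of Theorem~\ref{th_dfs_exp}. The paper treats odd and even $n$ in two separate computations rather than via your unified $a=\lceil n/2\rceil$, $b=\lfloor n/2\rfloor$ bookkeeping, and it does not isolate the boundary case $n=6$ explicitly, but the substance is identical.
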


\begin{proof}
1. Initial statements $T(1)=0$, $T(2)=1$, $T(3)=3$, $T(4)=6$, $T(5)=9$ can
be checked. For $n>5$, we consider odd and even $n$ and use formulae (\ref%
{fgf3}) and (\ref{fgf5}).

(i) Odd $n$%
\begin{eqnarray*}
T_{r}(n) &=&2T_{DFS}\left( \frac{n-1}{2}+1\right) +2T_{DFS}\left( \frac{n-1}{%
2}\right) +1 \\
&=&2\left( T_{DFS}\left( \frac{n-1}{2}\right) +T_{DFS}\left( \frac{n-1}{2}%
-1\right) +2\right) + \\
&&2\left( T_{DFS}\left( \frac{n-1}{2}-1\right) +T_{DFS}\left( \frac{n-1}{2}%
-2\right) +2\right) +1 \\
&=&2T_{DFS}\left( \frac{n-1}{2}\right) +2T_{DFS}\left( \frac{n-1}{2}%
-1\right) + \\
&&2T_{DFS}\left( \frac{n-1}{2}-1\right) +2T_{DFS}\left( \frac{n-1}{2}%
-2\right) +9 \\
&=&T_{r}(n-2)-1+T_{r}(n-4)-1+9=T_{r}(n-2)+T_{r}(n-4)+7.
\end{eqnarray*}%
(ii) Even $n$%
\begin{eqnarray*}
T_{r}(n) &=&T_{DFS}\left( \frac{n}{2}+1\right) +2T_{DFS}\left( \frac{n}{2}%
\right) +T_{DFS}\left( \frac{n}{2}-1\right) +1 \\
&=&T_{DFS}\left( \frac{n}{2}\right) +T_{DFS}\left( \frac{n}{2}-1\right) +2+
\\
&&2\left( T_{DFS}\left( \frac{n}{2}-1\right) +T_{DFS}\left( \frac{n}{2}%
-2\right) +2\right) + \\
&&T_{DFS}\left( \frac{n}{2}-2\right) +T_{DFS}\left( \frac{n}{2}-3\right) +2+1
\\
&=&T_{r}(n-2)-1+T_{r}(n-4)-1+9=T_{r}(n-2)+T_{r}(n-4)+7.
\end{eqnarray*}%
2. The proof is analogous and is based on formulae (\ref{fgf4}) and (\ref%
{fgf6}).\medskip
\end{proof}

As follows from Theorems \ref{th_seq_exp}, \ref{th_dfs_exp}, \ref{th_dls_exp}%
, and \ref{th_red_exp}, the optimal decomposition method is more efficient
than all the methods presented in section \ref{simple} from the perspective
of both complexity characteristics.

\begin{corollary}
\label{cor_red_exp}For an $n$-vertex $FG$:

1. The total number of terms $T(n)$ in the expression $Ex(FG)$ derived by
the optimal reduction method is expressed explicitly as follows: 
\begin{eqnarray*}
T(1) &=&0 \\
T(n) &\approx &\left( 4.\,\allowbreak 889\,6+\left( -1\right) ^{n}\cdot
0.07008\,9\right) \left( \sqrt{\frac{\sqrt{5}+1}{2}}\right) ^{n}+ \\
&&\left[ \left( 1+\left( -1\right) ^{n}\right) 0.02\,\allowbreak
016\,3+\left( 1+\left( -1\right) ^{n+1}\right) 0.08299\,6i\right] \left( i%
\sqrt{\frac{\sqrt{5}-1}{2}}\right) ^{n}-7\quad \\
(n &>&1)
\end{eqnarray*}

or%
\begin{eqnarray*}
T(1) &=&0 \\
\quad T(n) &\approx &\left( 4.\,\allowbreak 889\,6+\left( -1\right)
^{n}\cdot 0.07008\,9\right) \left( \sqrt{\frac{\sqrt{5}+1}{2}}\right) ^{n}+
\\
&&\left[ 0.04032\,5\cos \left( \frac{n\pi }{2}\right) -0.165\,99\sin \left( 
\frac{n\pi }{2}\right) \right] \left( \sqrt{\frac{\sqrt{5}-1}{2}}\right)
^{n}-7 \\
(n &>&1).
\end{eqnarray*}

2. The number of plus operators $P(n)$ in the expression $Ex(FG)$ derived by
the optimal reduction method is expressed explicitly as follows: 
\begin{eqnarray*}
P(1) &=&0 \\
P(n) &\approx &\left( 1.\,\allowbreak 867\,7+\left( -1\right) ^{n}\cdot
0.02\allowbreak 677\,2\right) \left( \sqrt{\frac{\sqrt{5}+1}{2}}\right) ^{n}+
\\
&&\left[ \left( 1+\left( -1\right) ^{n}\right) 0.05278\,6+\left( 1+\left(
-1\right) ^{n+1}\right) 0.217\,29i\right] \left( i\sqrt{\frac{\sqrt{5}-1}{2}}%
\right) ^{n}-3\quad \\
(n &>&1)
\end{eqnarray*}

or%
\begin{eqnarray*}
P(1) &=&0 \\
P(n) &\approx &\left( 1.\,\allowbreak 867\,7+\left( -1\right) ^{n}\cdot
0.02677\,2\right) \left( \sqrt{\frac{\sqrt{5}+1}{2}}\right) ^{n}+ \\
&&\left[ 0.105\,57\cos \left( \frac{n\pi }{2}\right) -0.434\,57\sin \left( 
\frac{n\pi }{2}\right) \right] \left( \sqrt{\frac{\sqrt{5}-1}{2}}\right)
^{n}-3 \\
(n &>&1).
\end{eqnarray*}
\end{corollary}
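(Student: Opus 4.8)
The plan is to solve the two linear recurrences of Theorem~\ref{th_red_exp} by the method \cite{Ros} for linear recurrence relations, exactly as was done for Corollaries~\ref{cor_seq_exp}, \ref{col_dfs_exp}, and~\ref{cor_dls_exp}. Both recurrences share the homogeneous part $x(n)=x(n-2)+x(n-4)$, whose characteristic equation is $\lambda^{4}-\lambda^{2}-1=0$. Putting $\mu=\lambda^{2}$ reduces this to $\mu^{2}-\mu-1=0$, so $\mu=\frac{1+\sqrt{5}}{2}$ or $\mu=\frac{1-\sqrt{5}}{2}$; since the first is positive and the second negative, the four characteristic roots are the distinct numbers $\pm\sqrt{\frac{\sqrt{5}+1}{2}}$ and $\pm i\sqrt{\frac{\sqrt{5}-1}{2}}$. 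Hence every solution of the homogeneous recurrence is a linear combination of the $n$-th powers of these four roots, which already explains the bases $\left(\sqrt{\frac{\sqrt{5}+1}{2}}\right)^{n}$ and $\left(i\sqrt{\frac{\sqrt{5}-1}{2}}\right)^{n}$ occurring in the statement.

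Next I would account for the inhomogeneous term. Since $\lambda=1$ is not a characteristic root (as $1-1-1\neq 0$), a constant particular solution exists: for the $T$-recurrence it is the constant $c$ with $c=2c+7$, i.e.\ $c=-7$, and for the $P$-recurrence it is $c=-3$. These are precisely the trailing additive constants $-7$ and $-3$ in the claimed formulas. The general solution is therefore
\begin{equation*}
x(n)=A\left(\sqrt{\tfrac{\sqrt{5}+1}{2}}\right)^{n}+B\left(-\sqrt{\tfrac{\sqrt{5}+1}{2}}\right)^{n}+C\left(i\sqrt{\tfrac{\sqrt{5}-1}{2}}\right)^{n}+D\left(-i\sqrt{\tfrac{\sqrt{5}-1}{2}}\right)^{n}+c ,
\end{equation*}
with $c=-7$ for $T$ and $c=-3$ for $P$. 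Grouping the two real roots gives $A\left(\sqrt{\tfrac{\sqrt{5}+1}{2}}\right)^{n}+B(-1)^{n}\left(\sqrt{\tfrac{\sqrt{5}+1}{2}}\right)^{n}=\left(A+(-1)^{n}B\right)\left(\sqrt{\tfrac{\sqrt{5}+1}{2}}\right)^{n}$, which is exactly the shape $\left(4.8896+(-1)^{n}\cdot 0.070089\right)(\cdots)^{n}$ of the first term in the corollary; regrouping the two imaginary roots in the same way produces the bracketed coefficients built from $1+(-1)^{n}$ and $1+(-1)^{n+1}$.

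The four constants $A,B,C,D$ are then determined by matching four consecutive initial values: $T(2)=1$, $T(3)=3$, $T(4)=6$, $T(5)=9$ for the first recurrence and $P(2)=0$, $P(3)=1$, $P(4)=2$, $P(5)=3$ for the second. Here one should note that the recurrence couples only indices of the same parity, so exactly four consecutive values are needed to fix both the even and the odd subsequence; and the value $T(1)=0$ does not fit the resulting formula (extrapolating the recurrence backwards would force $T(1)=T(5)-T(3)-7=-1$), which is why it --- and likewise $P(1)$ --- is stated separately. Solving the two $4\times 4$ linear systems and simplifying the surds to decimals gives the coefficients quoted. Finally, the alternative ``or'' forms follow from De Moivre's formula: since $\left(i\sqrt{\tfrac{\sqrt{5}-1}{2}}\right)^{n}=\left(\sqrt{\tfrac{\sqrt{5}-1}{2}}\right)^{n}\!\left(\cos\tfrac{n\pi}{2}+i\sin\tfrac{n\pi}{2}\right)$, combining the complex-conjugate pair $C,D$ converts the two imaginary terms into a single real expression $\left[\alpha\cos\tfrac{n\pi}{2}-\beta\sin\tfrac{n\pi}{2}\right]\left(\sqrt{\tfrac{\sqrt{5}-1}{2}}\right)^{n}$.

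The only substantive work is the computation itself: carefully solving the two $4\times 4$ systems, simplifying radical expressions to the stated decimal approximations, and keeping the parity bookkeeping straight so that the regrouped $(-1)^{n}$ factors and the $\cos\frac{n\pi}{2},\sin\frac{n\pi}{2}$ terms come out with the right coefficients. I expect this arithmetic, rather than any conceptual difficulty, to be the main obstacle.
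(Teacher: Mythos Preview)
Your approach is exactly what the paper does: it states only that the proof ``uses the recurrences obtained in Theorem~\ref{th_red_exp} and is based on the method \cite{Ros},'' and you have simply written out that standard linear-recurrence computation in full, including the correct characteristic equation $\lambda^{4}-\lambda^{2}-1=0$, the particular solutions $-7$ and $-3$, and the reason $n=1$ must be listed separately. There is nothing to add; your plan is sound and matches the paper.
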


\bigskip The proof of Corollary \ref{cor_red_exp} uses the recurrences
obtained in Theorem \ref{th_red_exp} and is based on the method \cite{Ros}.

For $n=9$, the algebraic expression derived by the optimal reduction method
is 
\begin{eqnarray*}
&&(((a_{1}a_{2}+b_{1})a_{3}+a_{1}b_{2})a_{4}+(a_{1}a_{2}+b_{1})b_{3})(a_{5}(a_{6}(a_{7}a_{8}+b_{7})+b_{6}a_{8})+
\\
&&b_{5}(a_{7}a_{8}+b_{7}))+((a_{1}a_{2}+b_{1})a_{3}+a_{1}b_{2})b_{4}(a_{6}(a_{7}a_{8}+b_{7})+b_{6}a_{8}).
\end{eqnarray*}%
It contains $35$ terms and $13$ plus operators.

\subsection{Time and Space Expenses of the Method}

As noted above, the optimal reduction method is equivalent to applying the
DFS method to four subgraphs which are revealed by decomposing an $n$-vertex 
$FG$ in the middle. The DFS method has to be employed twice in the direct
and twice in the opposite direction. For this reason, we will implement the
optimal reduction method using the algorithms presented in section \ref%
{dfs_anal} as its base.\medskip

$FG\_Reduction\_Optimal(n,Expr)$

\begin{enumerate}
\item $\mathbf{if}$ $n>2$

\item \label{redo2}\qquad $FG\_DFS\_opposite(1,\left\lceil \frac{n}{2}%
\right\rceil ,Expr)$

\item \qquad $\mathbf{if}$ $\left\lceil \frac{n}{2}\right\rceil >2$

\item $\qquad \qquad \mathbf{Insert\_to\_Head}\left( "(",Expr\right) $

\item $\qquad \qquad \mathbf{Insert\_to\_End}(")",Expr)$

\item \label{redo6}\qquad $FG\_DFS\_direct(\left\lceil \frac{n}{2}%
\right\rceil ,n,Expr2)$

\item \qquad $\mathbf{if}$ $\left\lfloor \frac{n}{2}\right\rfloor >1$

\item \qquad \qquad $\mathbf{Insert\_to\_Head}\left( "(",Expr2\right) $

\item $\qquad \qquad \mathbf{Insert\_to\_End}(")",Expr2)$

\item \qquad $\mathbf{Concatenate}(Expr,Expr2)$

\item \qquad $\mathbf{Insert\_to\_End}\left( "+",Expr\right) $

\item \label{redo12}\qquad $FG\_DFS\_opposite(1,\left\lceil \frac{n}{2}%
\right\rceil -1,Expr2)$

\item \qquad $\mathbf{if}$ $\left\lceil \frac{n}{2}\right\rceil >3$

\item \qquad \qquad $\mathbf{Insert\_to\_Head}\left( "(",Expr2\right) $

\item $\qquad \qquad \mathbf{Insert\_to\_End}(")",Expr2)$

\item \qquad $\mathbf{Concatenate}(Expr,Expr2)$

\item \qquad $\mathbf{Insert\_to\_End}\left( "b_{\left\lceil \frac{n}{2}%
\right\rceil -1}",Expr\right) $

\item \label{redo18}\qquad $FG\_DFS\_direct(\left\lceil \frac{n}{2}%
\right\rceil +1,n,Expr2)$

\item \qquad $\mathbf{if}$ $\left\lfloor \frac{n}{2}\right\rfloor >2$

\item \qquad \qquad $\mathbf{Insert\_to\_Head}\left( "(",Expr2\right) $

\item $\qquad \qquad \mathbf{Insert\_to\_End}(")",Expr2)$

\item \qquad $\mathbf{Concatenate}(Expr,Expr2)$

\item $\mathbf{else}$

\item $\qquad \mathbf{if}$ $n$ $=$ $2$

\item $\qquad \qquad Expr\leftarrow "a_{1}"$

\item $\qquad \mathbf{else}$

\item $\qquad \qquad Expr\leftarrow \mathbf{NULL}$
\end{enumerate}

For even $n$, the number of joint reductions implemented by this algorithm
will be greater than the number of fork reductions by one. Hence, there
exists the second version of the algorithm with another choice of the middle
of the graph for even $n$ which implements the greater number of fork
reductions. Both realizations give the same expression for odd $n$.

Running times of procedures for the DFS method applied to an $\frac{n}{2}$%
-vertex $FG$ (lines \ref{redo2}, \ref{redo6}, \ref{redo12}, \ref{redo18})
are, as follows from section \ref{dfs_anal}, $\Theta \left( \left( \frac{1+%
\sqrt{5}}{2}\right) ^{n/2}\right) $. Other operations of the algorithm are
performed in $O(1)$ time. Thus, the running time of the algorithm is $\Theta
\left( \left( \sqrt{\frac{1+\sqrt{5}}{2}}\right) ^{n}\right) $.

The amount of memory that requires the algorithm is determined only by the
size of the derived expression and, by Corollary \ref{cor_red_exp}, is also $%
\Theta \left( \left( \sqrt{\frac{1+\sqrt{5}}{2}}\right) ^{n}\right) $.

\section{Decomposition Method\label{sec_fg_dec}}

This method is based on revealing subgraphs in the initial graph. The
resulting expression is produced by a special composition of subexpressions
describing these subgraphs.

Consider the $n$-vertex $FG$ presented in Figure \ref{fig2}. Denote by $%
E(p,q)$ a subexpression related to its subgraph (which is an $FG$ as well)
having a source $p$ ($1\leq p\leq n$) and a sink $q$ ($1\leq q\leq n$, $%
q\geq p$). If $q-p\geq 2$, then we choose any \textit{decomposition vertex} $%
i$ ($p+1\leq i\leq q-1$) in a subgraph, and, in effect, split it at this
vertex (Figure \ref{fg_fig7}). Otherwise, we assign final values to $E(p,q)$%
. As follows from the $FG$ structure, any path from vertex $p$ to vertex $q$
passes through vertex $i$ or avoids it via edge $b_{i-1}$. Therefore, $%
E(p,q) $ can be generated by the following recursive procedure (\textit{%
decomposition procedure}):

\begin{enumerate}
\item \label{1}$\mathbf{case}$ $q=p:E(p,q)\leftarrow 1$

\item \label{2}$\mathbf{case}$ $q=p+1:E(p,q)\leftarrow a_{p}$

\item \label{3}$\mathbf{case}$ $q\geq p+2:\mathbf{choice}(p,q,i)$

\item \label{4}$\qquad \qquad \qquad \quad ~E(p,q)\leftarrow
E(p,i)E(i,q)+E(p,i-1)b_{i-1}E(i+1,q)$
\end{enumerate}

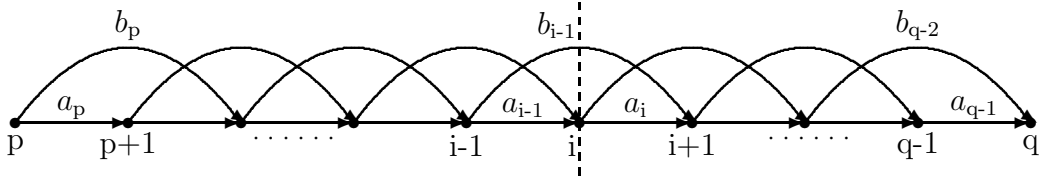
\begin{figure}[tbp]
\setlength{\unitlength}{1.0cm}
\par
\begin{picture}(5,2)(-0.5,-0.4)\thicklines

\multiput(0,0)(1.5,0){10}{\circle*{0.15}}

\put(0,-0.3){\makebox(0,0){p}}
\put(1.5,-0.3){\makebox(0,0){p+1}}
\put(6,-0.3){\makebox(0,0){i-1}}
\put(7.4,-0.3){\makebox(0,0){i}}
\put(9,-0.3){\makebox(0,0){i+1}}
\put(12,-0.3){\makebox(0,0){q-1}}
\put(13.5,-0.3){\makebox(0,0){q}}

\multiput(0,0)(1.5,0){9}{\vector(1,0){1.5}}

\put(0.75,0.2){\makebox(0,0){$a_{\text{p}}$}}
\put(6.75,0.2){\makebox(0,0){$a_{\text{i-1}}$}}
\put(8.25,0.2){\makebox(0,0){$a_{\text{i}}$}}
\put(12.75,0.2){\makebox(0,0){$a_{\text{q-1}}$}}

\qbezier(0,0)(1.5,2)(3,0)
\qbezier(1.5,0)(3,2)(4.5,0)
\qbezier(3,0)(4.5,2)(6,0)
\qbezier(4.5,0)(6,2)(7.5,0)
\qbezier(6,0)(7.5,2)(9,0)
\qbezier(7.5,0)(9,2)(10.5,0)
\qbezier(9,0)(10.5,2)(12,0)
\qbezier(10.5,0)(12,2)(13.5,0)

\put(1.5,1.3){\makebox(0,0){$b_{\text{p}}$}}
\put(7.2,1.3){\makebox(0,0){$b_{\text{i-1}}$}}
\put(12,1.3){\makebox(0,0){$b_{\text{q-2}}$}}

\multiput(3.085,0)(1.5,0){8}{\vector(3,-2){0}}

\multiput(3.2,-0.2)(0.2,0){6}{\circle*{0.02}}
\multiput(10.05,-0.2)(0.2,0){6}{\circle*{0.02}}

\multiput(7.5,-0.7)(0,0.2){12}{\line(0,1){0.1}}

\end{picture}
\caption{Decomposition of a Fibonacci subgraph at vertex $i$.}
\label{fg_fig7}
\end{figure}

Lines \ref{1} and \ref{2} contain conditions of exit from the recursion. The
special case when a subgraph consists of a single vertex is considered in
line \ref{1}. It is clear that such a subgraph can be connected to other
subgraphs only serially. For this reason, it is accepted that its
subexpression is $1$, so that when it is multiplied by another
subexpression, the final result is not influenced. Line \ref{2} describes a
subgraph consisting of a single edge. The corresponding subexpression
consists of a single term equal to the edge label. The general case is
processed in lines \ref{3} and \ref{4}. The procedure, $\mathbf{choice}%
(p,q,i)$, in line \ref{3} chooses an arbitrary decomposition vertex $i$ on
the interval $(p,q)$ so that $p<i<q$. A current subgraph is decomposed into
four new subgraphs in line \ref{4}. Subgraphs described by subexpressions $%
E(p,i)$ and $E(i,q)$ include all paths from vertex $p$ to vertex $q$ passing
through vertex $i$. Subgraphs described by subexpressions $E(p,i-1)$ and $%
E(i+1,q)$ include all paths from vertex $p$ to vertex $q$ passing through
edge $b_{i-1}$.

$E(1,n)$ is the expression of the initial $n$-vertex $FG$ ($Ex\left(
FG\right) $). Hence, the decomposition procedure is initially invoked by
substituting parameters $1$ and $n$ instead of $p$ and $q$, respectively.

In \cite{KoL} we proved the following theorem that determines an optimal
location of the decomposition vertex $i$ in an arbitrary interval $(p,q)$ of
a Fibonacci graph from the perspective of the first complexity
characteristic.

\begin{theorem}
\label{th_fg-n/2}The representation with a minimum total number of terms
among all possible representations of $Ex(FG)$ derived by the decomposition
method is achieved if and only if in each recursive step $i$ is equal to $%
\frac{q+p}{2}$ for odd $q-p+1$ and to $\frac{q+p-1}{2}$ or $\frac{q+p+1}{2}$
for even $q-p+1$, i.e., when $i$ is a middle vertex of the interval $(p,q)$.
Such a decomposition method is called \textit{optimal}.
\end{theorem}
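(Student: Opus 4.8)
The plan is to set up a recurrence for the total number of terms $T(p,q)$ in the subexpression $E(p,q)$ produced by the decomposition procedure, as a function of where the decomposition vertex $i$ is chosen, and then to show that this recurrence is minimized exactly when $i$ is the middle vertex. Write $m = q - p + 1$ for the number of vertices of the subgraph (an $FG$ again), and abbreviate $T(m)$ for the term count when the subgraph is always split optimally. From line~\ref{4} of the decomposition procedure, $E(p,q) = E(p,i)E(i,q) + E(p,i-1)b_{i-1}E(i+1,q)$, and concatenation distributes a subexpression's terms across the other factor. So if $E(p,i)$ has $T_{1}$ terms, $E(i,q)$ has $T_{2}$ terms, $E(p,i-1)$ has $T_{3}$ terms, $E(i+1,q)$ has $T_{4}$ terms, then the product $E(p,i)E(i,q)$ contributes $T_{1}T_{2}$ terms (each term of one factor paired with each term of the other — or, more precisely, the number of terms in a product $A\cdot B$ equals $(\#\text{terms in }A)\cdot(\#\text{terms in }B)$ when written out, but since we are not expanding, one must be careful: in the \emph{factored} expression the term count of $A\cdot B$ is $T_A + T_B$ only if one of them is a single product with no $+$; in general it is governed by the same bookkeeping as in Theorems~\ref{th_seq_exp}--\ref{th_dls_exp}). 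The first step, then, is to pin down precisely how terms accumulate under the concatenation in line~\ref{4}: since $E(p,i)$ and $E(i,q)$ are kept as separate parenthesized factors, $T(E(p,i)E(i,q)) = T(E(p,i)) + T(E(p,i)) \cdot(\text{something})$ — I would write out the exact rule by analogy with the DFS recurrence, obtaining $T(p,q) = f\big(T(p,i),T(i,q),T(p,i-1),T(i+1,q)\big)$ with $f$ an explicit (roughly multiplicative-then-additive) expression, plus a fixed additive constant for the extra term $b_{i-1}$.

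The second step is to translate this into a one-variable recurrence in the split position. Fix the interval of length $m$ and let $i$ sit at distance making the left part have $k$ vertices ($p$ through $i$, so $k = i - p + 1$) and the right part $m - k + 1$ vertices. Then the four subgraphs have $k$, $m-k+1$, $k-1$, $m-k$ vertices respectively, so $T(p,q)$ as a function of $k$ has the shape $g(k) = \varphi\big(T(k), T(m-k+1), T(k-1), T(m-k)\big) + c$ where each $T(\cdot)$ is the \emph{optimal} term count for a smaller $FG$ (here the induction hypothesis enters: for strictly smaller graphs we may assume the optimal split is the middle one and $T(\cdot)$ is the monotone, convex-like sequence coming out of Theorem~\ref{th_red_exp}-style recurrences). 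The claim to prove is that $g(k)$ is minimized at $k = \lceil m/2\rceil$ (and equivalently $\lfloor m/2\rfloor + $ the symmetric choice for even $m$).

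The third step is the minimization itself. I would argue by a symmetry-and-convexity exchange argument: $g(k)$ and $g(m+1-k)$ are equal by the left-right symmetry of the Fibonacci graph (the map reversing the edge orientations sends a $k$-split to an $(m+1-k)$-split and preserves term counts — this is exactly the symmetry already exploited in point~3 of the proof of Theorem~\ref{th_fg_red_min}), so it suffices to show $g$ is "unimodal with minimum in the middle," i.e. that moving the split one step toward the center never increases the count. Because $T(\cdot)$ grows like $\phi^{m}$ (Fibonacci growth, by the Remark after Theorem~\ref{th_seq_exp}), the dominant contribution to $g(k)$ behaves like a sum of two exponentials $\alpha\,\phi^{k} + \alpha\,\phi^{m-k}$ (up to lower-order terms and constants), and such a function of $k$ is strictly convex and symmetric about $k = (m+1)/2$, hence minimized at the central integer(s). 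I would make this rigorous not via the closed form but directly from the recurrence $T(m) = T(m-1) + T(m-2) + O(1)$ by an induction showing the discrete second difference $g(k-1) - 2g(k) + g(k+1) \ge 0$, equivalently $g(k+1) - g(k) \ge g(k) - g(k-1)$.

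\textbf{Main obstacle.} The hard part will be the bookkeeping in the first step — getting the exact term-counting rule for the factored product $E(p,i)E(i,q)$ and for $E(p,i-1)b_{i-1}E(i+1,q)$ right, since "number of terms counting multiplicities" behaves multiplicatively across a concatenation of two multi-term factors, and one must track this carefully so that the resulting $g(k)$ has genuinely the convex exponential shape claimed; a naive additive count would give the wrong extremum. Once the recurrence for $g(k)$ is written correctly, the convexity/unimodality argument is routine, and the tie for even $m$ between the two central splits follows from the exact symmetry $g(k) = g(m+1-k)$, matching the statement $i = \frac{q+p-1}{2}$ or $\frac{q+p+1}{2}$.
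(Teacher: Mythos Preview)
The paper does not actually prove this theorem here; it merely cites \cite{KoL}. So there is no in-paper argument to compare against, and I assess your plan on its own.

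Your overall architecture --- induction on the size $m=q-p+1$, the symmetry $g(k)=g(m+1-k)$, and a convexity/unimodality argument placing the minimum of $g$ at the center --- is the right one. But the first step, which you correctly flag as the crux, goes astray. The term-counting rule is \emph{purely additive}: in the factored string $E(p,i)E(i,q)+E(p,i-1)b_{i-1}E(i+1,q)$ nothing is expanded, so the total number of term occurrences is exactly
\[
g(k)=T(k)+T(m-k+1)+T(k-1)+T(m-k)+1\qquad(k=i-p+1),
\]
the ``$+1$'' being $b_{i-1}$. This is precisely the recurrence the paper writes down immediately after the theorem. Your worry that concatenation is multiplicative in the term count, and your assertion that ``a naive additive count would give the wrong extremum,'' are both mistaken; additive is correct and is exactly what makes the middle split optimal. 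A second slip: you claim $T(\cdot)$ ``grows like $\phi^{m}$ (Fibonacci growth).'' That is the growth of the path count (or of the DFS term count); the optimal-decomposition $T$ is $\Theta(n^{2})$, so your heuristic of a ``sum of two exponentials $\alpha\phi^{k}+\alpha\phi^{m-k}$'' does not apply. What you actually need to carry through the induction is that the first differences $D(j)=T^{\mathrm{opt}}(j)-T^{\mathrm{opt}}(j{-}1)$ are nondecreasing, which yields $g(k{+}1)-g(k)=[D(k{+}1)-D(m{-}k{+}1)]+[D(k)-D(m{-}k)]\le 0$ for $k$ left of center; and for the ``only if'' clause you must further extract enough strict increase in $D$ to rule out ties away from the middle.
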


The following theorem for the second complexity characteristic is proven in 
\cite{Kor}.

\begin{theorem}
\label{th_fg-n/2_P}The representation with a minimum number of plus
operators among all possible representations of $Ex(FG)$ derived by the
decomposition method can be achieved by the optimal decomposition method.
\end{theorem}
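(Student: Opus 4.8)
The plan is to adapt the strategy behind Theorem~\ref{th_fg-n/2}, replacing the analysis of the first complexity characteristic by one for plus operators and reducing the choice of decomposition vertices to a single convexity statement. The first step is bookkeeping. Since every Fibonacci subgraph is determined up to isomorphism by its number of vertices, the least number of plus operators the decomposition procedure can produce for an $m$-vertex $FG$ is a well-defined quantity $P(m)$, with $P(1)=P(2)=0$, and the recursive identity $E(p,q)=E(p,i)E(i,q)+E(p,i-1)b_{i-1}E(i+1,q)$ (line~\ref{4}) gives, for $m\ge 3$,
\[
P(m)=1+\min_{1\le k\le m-2}\left(P(k+1)+P(k)+P(m-k)+P(m-k-1)\right),
\]
where $k=i-p$ is the offset of the chosen decomposition vertex in the interval $(p,q)$ of length $m$. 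Writing $h(k):=P(k)+P(k+1)$ and using that $k\mapsto m-1-k$ is an involution of $\{1,\dots,m-2\}$, this becomes $P(m)=1+\min_{1\le k\le m-2}\left(h(k)+h(m-1-k)\right)$: one must split the fixed total $m-1$ into two parts and pay $h$ on each.

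The second step invokes the standard extremal principle: if $h$ is convex on the integers --- equivalently, if the differences $\Delta(k):=P(k+1)-P(k)$ are nondecreasing --- then $h(k)+h(m-1-k)$, over all ways of writing $m-1=k+(m-1-k)$, is minimized by making the two parts as nearly equal as possible. For odd $m$ this forces $k=(m-1)/2$, so $i=\frac{q+p}{2}$; for even $m$ it permits $k\in\{\frac{m}{2}-1,\frac{m}{2}\}$, so $i\in\{\frac{q+p-1}{2},\frac{q+p+1}{2}\}$; in both cases $i$ is a middle vertex of $(p,q)$, so the optimal decomposition method attains the minimum. Because $h$ need not be strictly convex, non-central vertices may tie it, which is exactly why the theorem says the minimum ``can be achieved'' rather than ``if and only if''. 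Everything therefore reduces to proving that $P$ is convex.

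This convexity is the crux --- and the main obstacle --- because $P$ is itself \emph{defined} by a minimization, so convexity and optimality of the middle vertex must be proved together, by strong induction on $m$. Suppose that for every $m'<m$ the middle vertex is optimal, so that $P(m')$ equals the explicit ``balanced'' recurrence ($P(m')=1+2P(\frac{m'+1}{2})+2P(\frac{m'-1}{2})$ for odd $m'$, and $P(m')=1+P(\frac{m'}{2}+1)+2P(\frac{m'}{2})+P(\frac{m'}{2}-1)$ for even $m'$), and suppose the differences $\Delta(1),\dots,\Delta(m-2)$, all determined by $P$-values below $m$, are nondecreasing. Then $h$ is convex on $\{1,\dots,m-2\}$, so by the extremal principle $P(m)$ is given by the same balanced recurrence. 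It remains to check the one new instance of convexity, $\Delta(m-2)\le\Delta(m-1)$, i.e.\ $\nabla(m):=P(m)-2P(m-1)+P(m-2)\ge 0$. Substituting the balanced recurrences for $P(m),P(m-1),P(m-2)$ and cancelling, a routine computation yields $\nabla(m)=0$ when $m$ is odd and $\nabla(m)=\nabla\!\left(\tfrac{m}{2}+1\right)+\nabla\!\left(\tfrac{m}{2}\right)$ when $m$ is even, the latter being nonnegative since both arguments are less than $m$. With the immediate base cases ($m\le 5$) this closes the induction, establishing simultaneously that $P$ is convex and that the middle vertex --- hence the optimal decomposition method --- realizes the minimum number of plus operators, which is the assertion of the theorem.
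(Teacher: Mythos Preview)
The paper does not actually prove this theorem; it simply cites \cite{Kor} and moves on, so there is no in-paper argument to compare against. Evaluated on its own, your proof is correct: the dynamic-programming recurrence $P(m)=1+\min_{k}\bigl(h(k)+h(m-1-k)\bigr)$ with $h(k)=P(k)+P(k+1)$ is exactly what line~\ref{4} of the decomposition procedure yields, and the simultaneous strong induction (middle vertex optimal $\Leftrightarrow$ $P$ convex) is the right way to break the apparent circularity. The computation $\nabla(m)=0$ for odd $m$ and $\nabla(m)=\nabla(m/2+1)+\nabla(m/2)$ for even $m$ checks out once the balanced recurrences are substituted, and with the base cases $m\le 5$ the induction closes. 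Two small cosmetic points: the word ``equivalently'' is slightly loose---what you actually use (and prove) is that $P$ convex implies $h$ convex, which is the only direction needed---and the identities for $\nabla(m)$ require $m\ge 5$ (odd) and $m\ge 6$ (even), which is exactly why you need base cases through $m=5$. Your observation that ties can occur because $h$ is only weakly convex aligns with the paper's remark, just after the theorem, that for the ``special values'' $n=7,13,\ldots$ non-central decomposition vertices also achieve the minimum.
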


It can be easily shown that for an $n$-vertex $FG$:\medskip

1. The total number of terms $T(n)$ in the expression $Ex(FG)$ derived by
the optimal decomposition method is defined recursively as follows: 
\begin{eqnarray*}
T(1) &=&0 \\
T(2) &=&1 \\
T(n) &=&T\left( \left\lceil \frac{n}{2}\right\rceil \right) +T\left(
\left\lfloor \frac{n}{2}\right\rfloor +1\right) +T\left( \left\lceil \frac{n%
}{2}\right\rceil -1\right) +T\left( \left\lfloor \frac{n}{2}\right\rfloor
\right) +1\quad (n>2).
\end{eqnarray*}

2. The number of plus operators $P(n)$ in the expression $Ex(FG)$ derived by
the optimal decomposition method is defined recursively as follows: 
\begin{eqnarray*}
P(1) &=&0 \\
P(2) &=&0 \\
P(n) &=&P\left( \left\lceil \frac{n}{2}\right\rceil \right) +P\left(
\left\lfloor \frac{n}{2}\right\rfloor +1\right) +P\left( \left\lceil \frac{n%
}{2}\right\rceil -1\right) +P\left( \left\lfloor \frac{n}{2}\right\rfloor
\right) +1\quad (n>2).
\end{eqnarray*}%
For large $n$%
\begin{equation*}
T(n)\approx 4T\left( \left\lceil \frac{n}{2}\right\rceil \right) +1.
\end{equation*}

By the \textit{master theorem} \cite{CLR} recurrences like 
\begin{equation*}
T(n)=\alpha T\left( \frac{n}{\beta }\right) +f(n),
\end{equation*}%
where $\alpha \geq 1$ and $\beta >1$ are constants, and $\frac{n}{\beta }$
is interpreted as either $\left\lfloor \frac{n}{\beta }\right\rfloor $ or $%
\left\lceil \frac{n}{\beta }\right\rceil $, can be bounded asymptotically as
follows: 
\begin{equation*}
T(n)=\Theta \left( n^{\log _{\beta }\alpha }\right)
\end{equation*}%
if $f(n)=O\left( n^{\log _{\beta }\alpha -\epsilon }\right) $ for some
constant $\epsilon >0$.

Therefore, $T(n)$ and $P(n)$ are $\Theta \left( n^{2}\right) $.

For $n=9$, the possible algebraic expression derived by the optimal
decomposition method is 
\begin{eqnarray*}
&&((a_{1}a_{2}+b_{1})(a_{3}a_{4}+b_{3})+a_{1}b_{2}a_{4})((a_{5}a_{6}+b_{5})(a_{7}a_{8}+b_{7})+a_{5}b_{6}a_{8})+
\\
&&(a_{1}(a_{2}a_{3}+b_{2})+b_{1}a_{3})b_{4}(a_{6}(a_{7}a_{8}+b_{7})+b_{6}a_{8}).
\end{eqnarray*}%
It contains $31$ terms and $11$ plus operators.

We conjecture that the optimal decomposition method provides an optimal
representation (for both our complexity characteristics) of an algebraic
expression related to a Fibonacci graph.

As shown in \cite{Kor}, the optimal decomposition method is not always the
only one that provides an expression for a Fibonacci graph with a minimum
number of plus operators. There exist \textit{special values of }$n$ when an 
$n$-vertex Fibonacci graph has several expressions with the same minimum
number of plus operators (among expressions derived by the decomposition
method). These special values are grouped as follows: 
\begin{equation*}
7,13\div 15,25\div 31,49\div 63,97\div 127,193\div 255,\ldots
\end{equation*}%
In the general view, they can be presented in the following way: 
\begin{eqnarray*}
n_{first_{\nu }} &\leq &n_{sp_{\nu }}\leq n_{last_{\nu }}, \\
n_{first_{1}} &=&n_{last_{1}}=7, \\
n_{first_{\nu }} &=&2n_{first_{\nu -1}}-1, \\
n_{last_{\nu }} &=&2n_{last_{\nu -1}}+1.
\end{eqnarray*}%
Here $\nu $ is a number of a group of special numbers; $n_{sp_{\nu }}$ is a
special number of the $\nu $-th group; $n_{first_{\nu }}$ and $n_{last_{\nu
}}$ are the first value and the last value, respectively, in the $\nu $-th
group. For all these values of $n$, not only the values of $i$ which are
mentioned in Theorem \ref{th_fg-n/2}, provide a minimum number of plus
operators in $Ex(FG)$.

It can be shown that if $i=3$ or $i=n-2$ in every recursive step (the same
value in each step) then the decomposition method turns out to be the DLS
method. If $i$ is an arbitrary number in the first recursive step, and in
all subsequent steps $i=2$ or $i=n-1$ (the same value in each step) then the
decomposition method may be interpreted as the reduction method.
Specifically, if $i$ in the first step is the same as $i$ in all following
steps, then the decomposition method coincides with the DFS method.

\subsection{Time and Space Expenses of the Method}

We propose the following recursive algorithm which realizes the optimal
decomposition\ method in accordance with the above procedure and Theorem \ref%
{th_fg-n/2}:\medskip\ 

$FG\_Decomposition\_Optimal(p,q,Expr)$

\begin{enumerate}
\item $\mathbf{if}$ $q-p>1$

\item \label{deco2}\qquad $FG\_Decomposition\_Optimal(p,\left\lfloor \frac{%
q+p}{2}\right\rfloor ,Expr)$

\item \qquad $\mathbf{if}$ $\left\lfloor \frac{q+p}{2}\right\rfloor -p>1$

\item $\qquad \qquad \mathbf{Insert\_to\_Head}\left( "(",Expr\right) $

\item $\qquad \qquad \mathbf{Insert\_to\_End}(")",Expr)$

\item \label{deco6}\qquad $FG\_Decomposition\_Optimal(\left\lfloor \frac{q+p%
}{2}\right\rfloor ,q,Expr2)$

\item \qquad $\mathbf{if}$ $q-\left\lfloor \frac{q+p}{2}\right\rfloor >1$

\item \qquad \qquad $\mathbf{Insert\_to\_Head}\left( "(",Expr2\right) $

\item $\qquad \qquad \mathbf{Insert\_to\_End}(")",Expr2)$

\item \qquad $\mathbf{Concatenate}(Expr,Expr2)$

\item \qquad $\mathbf{Insert\_to\_End}\left( "+",Expr\right) $

\item \label{deco12}\qquad $FG\_Decomposition\_Optimal(p,\left\lfloor \frac{%
q+p}{2}\right\rfloor -1,Expr2)$

\item \qquad $\mathbf{if}$ $\left\lfloor \frac{q+p}{2}\right\rfloor -p>2$

\item \qquad \qquad $\mathbf{Insert\_to\_Head}\left( "(",Expr2\right) $

\item $\qquad \qquad \mathbf{Insert\_to\_End}(")",Expr2)$

\item \qquad $\mathbf{Concatenate}(Expr,Expr2)$

\item \qquad $\mathbf{Insert\_to\_End}\left( "b_{\left\lfloor \frac{q+p}{2}%
\right\rfloor -1}",Expr\right) $

\item \label{deco18}\qquad $FG\_Decomposition\_Optimal(\left\lfloor \frac{q+p%
}{2}\right\rfloor +1,q,Expr2)$

\item \qquad $\mathbf{if}$ $q-\left\lfloor \frac{q+p}{2}\right\rfloor >2$

\item \qquad \qquad $\mathbf{Insert\_to\_Head}\left( "(",Expr2\right) $

\item $\qquad \qquad \mathbf{Insert\_to\_End}(")",Expr2)$

\item \qquad $\mathbf{Concatenate}(Expr,Expr2)$

\item $\mathbf{else}$

\item $\qquad \mathbf{if}$ $q-p$ $=$ $1$

\item $\qquad \qquad Expr\leftarrow "a_{p}"$

\item $\qquad \mathbf{else}$

\item $\qquad \qquad Expr\leftarrow \mathbf{NULL}$
\end{enumerate}

The algorithm is initially invoked by substituting $1$ and $n$ instead of $p$
and $q$, respectively, for deriving the expression of an $n$-vertex $FG$.
Four recursive calls (lines \ref{deco2}, \ref{deco6}, \ref{deco12}, \ref%
{deco18}) include a middle of the interval $(p,q)$ in their parameter list ($%
\left\lfloor \frac{q+p}{2}\right\rfloor $ is a number of the decomposition
vertex while its another possible number is $\left\lceil \frac{q+p}{2}%
\right\rceil $). Hence, the running time of the algorithm is $t(n)\approx
4t\left( \left\lceil \frac{n}{2}\right\rceil \right) +O(1)=\Theta \left(
n^{2}\right) $.

The amount of memory that requires the algorithm is determined only by the
size of the derived expression and, thus, is also $\Theta \left(
n^{2}\right) $.

\section{Generalized Decomposition (GD) Method\label{sec_GD}}

As follows from the previous section, the decomposition method is based on
splitting an $FG$ in each recursive step into two parts via decomposition
vertex $i$ and edge $b_{i-1}$. The GD method entails splitting an $FG$ in
each recursive step into an arbitrary number of parts (we will denote this
number by $m$) via \textit{decomposition vertices} $i_{1},i_{2},\ldots
,i_{m-1}$ and edges $b_{i_{1}-1},b_{i_{2}-1},\ldots ,b_{i_{m-1}-1}$,
respectively. An example for $m=3$ is illustrated in Figure \ref{fg_fig8}.

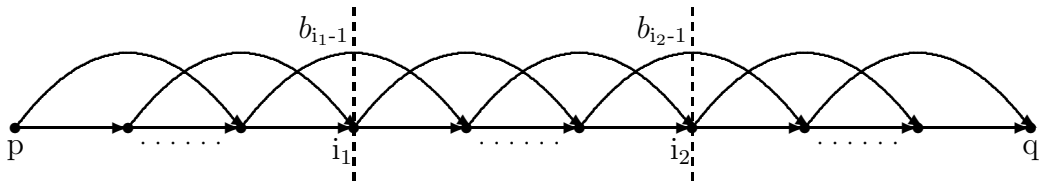
\begin{figure}[tbph]
\setlength{\unitlength}{1.0cm}
\par
\begin{picture}(5,2)(-0.5,-0.4)\thicklines

\multiput(0,0)(1.5,0){10}{\circle*{0.15}}

\put(0,-0.3){\makebox(0,0){p}}
\put(4.35,-0.3){\makebox(0,0){i$_{\text{1}}$}}
\put(8.85,-0.3){\makebox(0,0){i$_{\text{2}}$}}
\put(13.5,-0.3){\makebox(0,0){q}}

\multiput(0,0)(1.5,0){9}{\vector(1,0){1.5}}

\qbezier(0,0)(1.5,2)(3,0)
\qbezier(1.5,0)(3,2)(4.5,0)
\qbezier(3,0)(4.5,2)(6,0)
\qbezier(4.5,0)(6,2)(7.5,0)
\qbezier(6,0)(7.5,2)(9,0)
\qbezier(7.5,0)(9,2)(10.5,0)
\qbezier(9,0)(10.5,2)(12,0)
\qbezier(10.5,0)(12,2)(13.5,0)

\put(4.1,1.3){\makebox(0,0){$b_{\text{i}_{\text{1}}\text{-1}}$}}
\put(8.6,1.3){\makebox(0,0){$b_{\text{i}_{\text{2}}\text{-1}}$}}

\multiput(3.085,0)(1.5,0){8}{\vector(3,-2){0}}

\multiput(1.7,-0.2)(0.2,0){6}{\circle*{0.02}}
\multiput(6.2,-0.2)(0.2,0){6}{\circle*{0.02}}
\multiput(10.7,-0.2)(0.2,0){6}{\circle*{0.02}}

\multiput(4.5,-0.7)(0,0.2){12}{\line(0,1){0.1}}
\multiput(9,-0.7)(0,0.2){12}{\line(0,1){0.1}}

\end{picture}
\caption{Decomposition of a Fibonacci subgraph at vertices $i_{1}$ and $%
i_{2} $.}
\label{fg_fig8}
\end{figure}

In all cases when $m>2$, the decomposition procedure used in the previous
section is transformed to the more complex form. Specifically, for $m=3$,
the general line of the new decomposition procedure, corresponding to line %
\ref{4} of the decomposition procedure with $m=2$ is presented as: 
\begin{eqnarray*}
~E(p,q) &\leftarrow &E(p,i_{1})E(i_{1},i_{2})E(i_{2},q)+ \\
&&E(p,i_{1}-1)b_{i_{1}-1}E(i_{1}+1,i_{2})E(i_{2},q)+ \\
&&E(p,i_{1})E(i_{1},i_{2}-1)b_{i_{2}-1}E(i_{2}+1,q)+ \\
&&E(p,i_{1}-1)b_{i_{1}-1}E(i_{1}+1,i_{2}-1)b_{i_{2}-1}E(i_{2}+1,q).
\end{eqnarray*}%
The sum above consists of four parts, with each part including three
subexpressions corresponding to the three parts of a split subgraph. Hence,
a current subgraph is decomposed into twelve new subgraphs.

Suppose that an $FG$ is split into approximately equal parts in each
recursive step (distances between decomposition vertices are equal or
approximately equal). It will be the \textit{uniform GD method}.

The following theorem is proven in \cite{KoL1}.

\begin{theorem}
\label{th_gd}For an $n$-vertex $FG$, both the total number of terms $T(n)$
and the number of plus operators $P(n)$ in the expression $Ex(FG)$ derived
by the uniform GD method (the $FG$ is split into $m$ parts) are $O\left(
n^{1+\log _{m}2^{m-1}}\right) $.
\end{theorem}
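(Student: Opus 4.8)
The plan is to reduce the analysis to a divide-and-conquer recurrence and then apply the master theorem quoted above, exactly as in the $m=2$ case treated in Section~\ref{sec_fg_dec}. First I would write down the general form of the $m$-way decomposition identity. Splitting the subgraph on an interval $(p,q)$ at decomposition vertices $i_1<i_2<\cdots<i_{m-1}$, any path from $p$ to $q$ makes, independently at each $i_j$, one of two choices: pass through $i_j$, or jump over it along the edge $b_{i_j-1}$. Hence $E(p,q)$ is a sum indexed by the subsets $S\subseteq\{1,\dots,m-1\}$, where for $j\notin S$ the path passes through $i_j$ and for $j\in S$ it uses $b_{i_j-1}$; each summand is a product of exactly $m$ ``gap'' subexpressions $E(\cdot,\cdot)$ on the $m$ consecutive pieces, together with $|S|$ extra single-term factors $b_{i_j-1}$. (For $m=3$ this is the displayed four-line identity, with $m\cdot 2^{m-1}=12$ subgraphs.) Consequently, each recursive step produces $2^{m-1}$ summands joined by $2^{m-1}-1$ plus operators, contributes $\sum_{S}|S|=(m-1)2^{m-2}$ terms directly, and spawns $m\cdot 2^{m-1}$ recursive calls.

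Second, under the uniformity assumption the decomposition vertices are equally spaced, so each of the $m\cdot 2^{m-1}$ subproblems is an $FG$ on an interval of length $\frac{n}{m}+O(1)$: the pieces $E(i_j,i_{j+1})$ have about $n/m$ vertices, and the shifted pieces $E(i_j+1,i_{j+1}-1)$, $E(p,i_1-1)$, etc., are shorter only by a bounded constant. Thus every subproblem has size $\Theta(n/m)$, which may be read as $\lceil n/m\rceil$ or $\lfloor n/m\rfloor$ up to an additive constant. Writing $T(n)$ for the total number of terms, this gives
$$
T(n)=m\,2^{m-1}\,T\!\left(\frac{n}{m}\right)+O(1),
$$
since for fixed $m$ the $(m-1)2^{m-2}$ extra terms per node are $O(1)$; the bounded additive shifts in the arguments are absorbed by the master theorem's tolerance of floors and ceilings together with a standard domain change. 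The same recurrence holds for $P(n)$, the $O(1)$ term now accounting for the $2^{m-1}-1$ new plus operators.

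Third, I invoke the master theorem with $\alpha=m\,2^{m-1}\ge 1$, $\beta=m>1$, and $f(n)=O(1)$. Since $\log_\beta\alpha=\log_m\!\left(m\,2^{m-1}\right)=1+\log_m 2^{m-1}\ge 1>0$, we have $f(n)=O(1)=O\!\left(n^{\log_\beta\alpha-\epsilon}\right)$ for a small $\epsilon>0$, so $T(n)=\Theta\!\left(n^{\log_m(m\,2^{m-1})}\right)=\Theta\!\left(n^{1+\log_m 2^{m-1}}\right)$, and identically $P(n)$; in particular both are $O\!\left(n^{1+\log_m 2^{m-1}}\right)$, as claimed.

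The main obstacle is the first step: establishing that the $m$-way identity has \emph{exactly} $2^{m-1}$ summands, each a product of \emph{exactly} $m$ gap-subexpressions, so that the branching factor is precisely $m\cdot 2^{m-1}$. This rests on two structural facts about Fibonacci graphs --- that at each decomposition vertex the pass-through/skip choice is genuinely binary (as already argued for $m=2$), and that these choices for distinct $j$ do not interact because consecutive decomposition vertices are at distance at least $2$, so the pieces in every summand are disjoint intervals that tile $[p,q]$. Once this clean ``sum of $2^{m-1}$ products of $m$ factors'' form is in hand, the recurrence and the master-theorem computation are routine; the only remaining care is the floor/ceiling bookkeeping in the sub-interval lengths, which the statement of the master theorem already accommodates.
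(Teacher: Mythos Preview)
Your proposal is correct and follows essentially the same approach the paper indicates: the paper does not reprove Theorem~\ref{th_gd} here but defers to \cite{KoL1}, and the surrounding text (the $m=2$ analysis via the master theorem in Section~\ref{sec_fg_dec} and the remark that for $m=3$ ``a current subgraph is decomposed into twelve new subgraphs'') makes clear that the intended argument is exactly the recurrence $T(n)\approx m\,2^{m-1}T(n/m)+O(1)$ followed by the master theorem, which is what you do. Your identification of the key structural point---that the $m$-way identity has $2^{m-1}$ summands each containing $m$ gap-subexpressions, provided consecutive decomposition vertices are at distance at least $2$---is the right place to put the emphasis.
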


As follows from Theorem \ref{th_gd}, $T(n)$ and $P(n)$ reach the minimum
complexity among $2\leq m\leq n-1$ when $m=2$. Substituting $2$ for $m$
gives $O\left( n^{2}\right) $ (we have the optimal decomposition method in
this case). Further, the complexity increases with the increase in $m$. For
example, we have $O\left( n^{1+\log _{3}4}\right) $ for $m=3$, $O\left(
n^{2.5}\right) $ for $m=4$, etc. In the extreme case, when $m=n-1$, all
inner vertices (from $2$ to $n-1$) of an $n$-vertex $FG$ are decomposition
vertices. The single recursive step is executed in this case, and all
revealed subgraphs are individual edges (labeled $a$ with index) connected
by additional edges (labeled $b$ with index). That is, in this instance, the
uniform GD method is reduced to the sequential paths method. Substituting $%
n-1$ for $m$ gives 
\begin{equation*}
O\left( n^{1+\log _{n-1}2^{n-2}}\right) >O\left( n^{1+\log
_{n}2^{n-2}}\right) =O\left( 2^{n-2}n\right) .
\end{equation*}

These results do not contradict our conjecture that the optimal
decomposition method provides an optimal representation. At least, it is the
best one among uniform GD methods (asymptotically).

\section{Conclusions}

Various algorithms generating an algebraic expression for a Fibonacci graph
were proposed. Complexities of representations derived by sequential paths,
DFS, DLS, and reduction methods increase exponentially as the number of the
graph's vertices increases. The generalized decomposition (GD) method has
algorithms generating representations with polynomial complexity.
Specifically, the decomposition method provides an algorithm for
constructing the expression with $O\left( n^{2}\right) $ complexity. The
methods we considered are closely related one to another. The GD method
encompasses the widest class of algorithms, and among them, all algorithms
of the decomposition method. One of them, the optimal decomposition method,
is assumed to be the best, from the perspective of complexity. The
decomposition algorithms class comprises as subclasses reduction algorithms,
two DLS algorithms, etc. The subclass of reduction algorithms includes the
optimal reduction method, two DFS algorithms, etc. The DFS method is the
worst among reduction algorithms and is assumed to be the worst among
decomposition algorithms. The sequential paths method is a special case of
the GD method, but it does not belong to the class of decomposition
algorithms. It generates an expression with the maximum complexity. The
different methods relationship is illustrated in Figure \ref{fg_fig2}. 
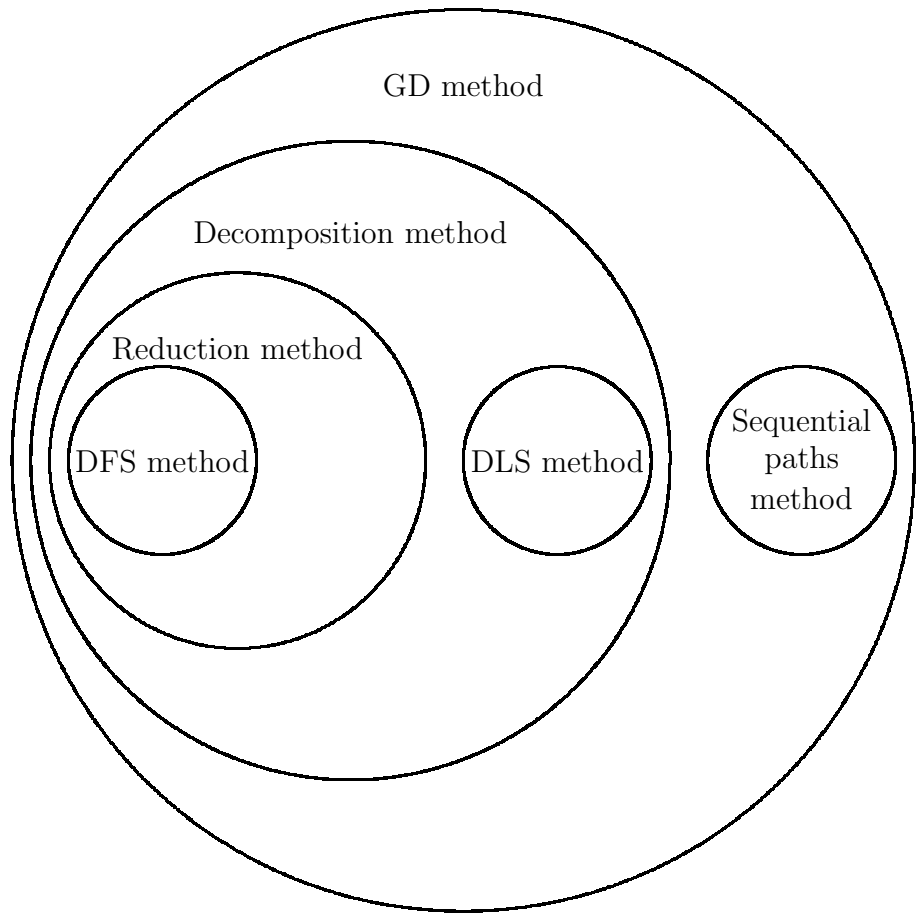
\begin{figure}[t]
\setlength{\unitlength}{1.0cm}
\par
\begin{picture}(10,10.5)(0,0)\thicklines

\put(7,6){\bigcircle[100]{12}}
\put(5.5,6){\bigcircle[100]{8.5}}
\put(11.5,6){\bigcircle[50]{2.5}}
\put(4,6){\bigcircle[50]{5}}
\put(3,6){\bigcircle[50]{2.5}}
\put(8.25,6){\bigcircle[50]{2.5}}

\put(7,11){\makebox(0,0){GD method}}
\put(5.5,9){\makebox(0,0){Decomposition method}}
\put(11.5,6.5){\makebox(0,0){Sequential}}
\put(11.5,6){\makebox(0,0){paths}}
\put(11.5,5.5){\makebox(0,0){method}}
\put(4,7.5){\makebox(0,0){Reduction method}}
\put(3,6){\makebox(0,0){DFS method}}
\put(8.25,6){\makebox(0,0){DLS method}}

\end{picture}
\caption{Relationships among methods of an algebraic expression generation
for a Fibonacci graph.}
\label{fg_fig2}
\end{figure}

\end{document}